\documentclass[12pt]{amsart}

	    \usepackage[T1]{fontenc}
    \usepackage[utf8]{inputenc}
	\usepackage[english]{babel}
	\usepackage[foot]{amsaddr}
	\usepackage{amssymb,amsmath,amsfonts,amsthm}
	\usepackage{mathtools} 
	\usepackage{enumerate,enumitem}
	\usepackage{graphicx}
	\usepackage{natbib,verbatim,subfigure}
	\usepackage{tikz}
	\usetikzlibrary{decorations,calc,intersections,shapes,spy}

	\usepackage[left=1.25in,top=1.25in,right=1.25in,bottom=1.25in]{geometry}
	\usepackage[onehalfspacing]{setspace}

	\newcommand{\pp}[1]{\left( #1 \right)}
	
	\newcommand{\pc}[1]{\left\{ #1 \right\}}
	
	\newcommand{\g}{\ifnum\currentgrouptype=16 \;\middle|\;\else\mid\fi}
	\newcommand{\df}[1]{\textit{#1}}
	\newcommand{\norm}[1]{\| #1 \|}
	
	\newcommand{\abs}[1]{ \left | #1 \right | }
	\DeclareMathOperator*{\argmin}{{arg\,min}}

	\def\de{\mathop{}\!\mathrm{d}}
	\def\E{\mathop{}\mathrm{E}}
	\def\Pr{\mathop{}\mathrm{Pr}}

	\def\U{\mathcal{U}}
	\def\V{\mathcal{V}}
	\def\B{\mathcal{B}}
	\def\P{\mathcal{P}}
	
	\def\M{\mathcal{M}}

	\def\ep{\varepsilon}

	\def\one{\mathbf{1}}

	\def\la{\lambda}
	\def\da{\delta}
	\def\Re{\mathbf{R}}
	
	\def\Na{\mathbf{N}}

	\newdimen\slantmathcorr
	\def\oversl#1{
	\setbox0=\hbox{$#1$}
	\slantmathcorr=\wd0
	\hskip 0.2\slantmathcorr \overline{\hbox to 0.8\wd0{%
	\vphantom{\hbox{$#1$}}}}
	\hskip-\wd0\hbox{$#1$}
	}
	\def\undersl#1{
	\setbox0=\hbox{$#1$}
	\slantmathcorr=\wd0
	\underline{\hbox to 0.8\wd0{%
	\vphantom{\hbox{$#1$}}}}
	\hskip-0.8\wd0\hbox{$#1$}
	}

	\theoremstyle{plain}
	\newtheorem{theorem}{Theorem}

	\newtheorem{proposition}{Proposition}
	\newtheorem{lemma}{Lemma}
	
	\newtheorem{corollary}{Corollary}

	\theoremstyle{definition}

	\theoremstyle{remark}

	\newtheorem*{claim*}{Claim}

	\makeatletter
	\newcommand{\customlabel}[2]{%
	\protected@write \@auxout {}{\string \newlabel {#1}{{#2}{}}}}
	\makeatother
	\customlabel{appsec:Estimation and closed preference sets}{S.1}

\begin{document}
\title[Recovering Preferences from Finite Data]{Recovering Preferences from Finite Data}

\author[Chambers]{Christopher P. Chambers}
\author[Echenique]{Federico Echenique}
\author[Lambert]{Nicolas S. Lambert}

\address[Chambers]{Department of Economics, Georgetown University}
\address[Echenique]{Division of the Humanities and Social Sciences,
  California Institute of Technology}
\address[Lambert]{Department of Economics, Massachusetts Institute of Technology}

\begin{abstract}
	We study preferences estimated from finite choice experiments and provide sufficient conditions for convergence to a unique underlying ``true'' preference. Our conditions are weak, and therefore valid in a wide range of economic environments. We develop applications to expected utility theory, choice over consumption bundles, menu choice and intertemporal consumption. Our framework unifies the revealed preference tradition with models that allow for errors.
\end{abstract}

\thanks{An early version circulated under the title  ``Preference Identification.'' 
We are particularly grateful to Jeroen Swinkels, the coeditor and six anonymous referees for insightful comments and suggestions. We are thankful to Pathikrit Basu for finding a mistake in a prior draft of the paper and for valuable suggestions. We thank seminar participants at many institutions, the audiences of conferences and workshops at HEC Paris, Paris School of Economics, UC Berkeley, University of Pennsylvania, University of Warwick, University of York, Oxford University, and Virginia Tech. Echenique thanks the National Science Foundation for financial support (Grants SES-1558757 and CNS-518941), and the Simons Institute at UC Berkeley for its hospitality. Lambert thanks Microsoft Research and the Cowles Foundation at Yale University for their hospitality and financial support.}
\date{First version: April 10, 2017; This version: \today{}.}

\maketitle

\section{Introduction}\label{sec:Introduction}


This paper provides conditions that guarantee asymptotic,  large-sample, nonparametric recoverability of preferences from binary choice data. 
We imagine an experimenter, Alice, offering a sequence of binary choice problems to a subject, Bob. For each choice problem, Bob is presented with a pair of alternatives and is asked to choose one (for example, alternatives could be lotteries over a collection of prizes). Alice wants to ensure that, if she observes Bob on sufficiently many choice problems, then she can approximate his preference over the entire set of alternatives to an arbitrary degree of precision. We study when this approximation possible. The goal is to have conditions that are easy to check and apply broadly.

Our approach is two-pronged. 
Our first model is anchored in the classical revealed preference tradition, whereby Alice seeks to rationalize the choice data exactly. The model is deterministic. Alice designs a fixed experiment, and hypothesizes that Bob chooses perfectly in accordance with his preference.
Our second model is statistical. The selection of experiments is random, and Alice supposes that Bob's choices are either observed with some error, or made with error.
In both models, we provide conditions for the underlying data-generating preference to be recovered in the limit as the available data grows large.
These conditions concern both the experimental design and the preference environment being considered. 

The main substantive condition is the \emph{local strictness} of preferences, a property first described by \citet{border1994dynamic}. Local strictness generalizes the familiar notion of local nonsatiation. A locally strict preference means that whenever $x$ is at least as good as $y$, there are alternatives $x'$ and $y'$, near $x$ and $y$ respectively, and for which $x'$ is strictly better than $y'$.
We prove that, together with technical conditions, local strictness ensures the convergence of any sequence of rationalizing preferences to the unique underlying preference governing the subject's choices as the number of observations goes to infinity.
In the statistical model, we introduce an estimator based on minimizing the Kemeny distance to the observed choices. Again imposing local strictness, we prove that this estimator is consistent, and provide general convergence rates.

The usefulness of our results is illustrated with applications to expected-utility theory and other environments where preferences are defined from utility functions; to monotone preferences in consumption theory, preferences over menus, and exponential discounting in intertemporal choice. 
In all these applications, we show how large finite experiments can approximate a preference of the appropriate kind. 

Our approach works with the choice-theoretic notion of partial observability, as in \citet{afriat}: when Bob is presented with a choice, he must select no more than one alternative. Therefore, choosing one alternative over another does not preclude the possibility that Bob would have been equally happy with the other alternative.
In Afriat's case, partial observability has wide-ranging implications, famously rendering  the concavity of utility nontestable.\footnote{\citet*{CES} provide a discussion of what partial observability entails.}  Partial observability generally leads to partial identification and hinders estimation.  
In our framework, preferences can still be fully learned in spite of their partial observability.

We allow for very general sets of preferences, which translate into a ``model-free'' approach. If, say, Alice is interested in exponential discounting, then she can estimate a preference without the need to impose this assumption on the data. If Bob is indeed discounting exponentially, then the preference estimates are guaranteed to converge to a preference that follows exponential discounting. And if the preference estimates do not converge to such a preference, then Alice may conclude that the exponential discounting hypothesis is incorrect. She can then evaluate the degree to which Bob's preference diverges from exponential discounting. 
The model-free aspect is also present in the statistical model: Alice can be agnostic about how the alternatives presented to subjects are sampled, and how subjects are assumed to make mistakes. Specifically, the estimator does not require knowledge of sampling or error probabilities.
 
Overall, our framework combines the elements of different traditions in economic modeling: the nonparametric approach and the finite amount of data in revealed preference analysis, the pairwise comparisons in decision theory and experiments (both online and laboratory), and the source of random errors in empirical research and econometrics.

The paper proceeds as follows. 
The remainder of this section reviews related works. Section~\ref{sec:Model} describes the model. 
Section~\ref{sec:Main results} provides the main results. Sections~\ref{sec:Preferences from utilities} and~\ref{sec:Application to monotone preferences} put these results to work in several economic environments. Section~\ref{sec:Concluding discussion} concludes with a discussion. Proofs are relegated to the appendices.

\subsection{Related literature.}\label{subsec:Related literature}

The literature on revealed-preference theory has been primarily concerned with the question of whether observed behavior conforms with standard models in economic theory. The workhorse of this literature, Afriat's theorem \citep{afriat,varian1982nonparametric}, works with the classical model of consumer demand with linear budgets and finite data, and has been expanded in many directions.\footnote{
For example, 
\citet{matzkin1991axioms} and \citet{forges2009} work with nonlinear budget sets, \citet*{chavas1993generalized} and \citet*{nishimura} work with general choice problems. Some extensions were also developed for multiperson equilibrium models, as in \citet*{brown1996testable}.
}
This line of research focuses for the most part on constructing revealed preference tests, and discussions of preference recoverability---as, for instance, in \citealp*{varian1982nonparametric}, or more recently \citealp*{cherchye2011revealed}---deal with bounding the sets of possible rationalizing preferences.

Closer to our work, in the context of consumer demand with linear budgets, \citet{mascolell78} introduces an ``income-Lipschitz'' condition and shows that, under this and a boundary condition, any sequence of preferences that rationalizes a sufficiently rich sequence of observations converges to the unique preference that rationalizes the entire demand function. \citet{forges2009} derive the analog of \citeauthor{mascolell78}'s results for nonlinear budget sets. 
In a model of dynamic asset markets, \citet{kubler2015identification} derive conditions that permit the identification of utilities and beliefs of a subjective expected utility maximizer, and show that preference estimates from finite data converge to the unique underlying preference as the number of observations grows large.  \citet*{polemarchakis2016identification}  give conditions under which the identification of preferences is possible, and demonstrate the convergence of preferences estimated from finite data to the unique underlying preference. More recently, \citet*{polemarchakiskubler2020} consider identification of preferences from finite data, with emphasis on applications to demand and aggregate demand.
As in the above works, our paper concerns the convergence of preferences that rationalize finite but unbounded data to the underlying data-generating preference, following large sample theory.\label{ref:R3LST1} We differ in three ways: first, we focus on data from pairwise choices instead of choice from budgets; second, we work with general environments beyond choice over commodities; third, we provide a broad sufficient condition on the class of preferences under consideration. 
Because we abstract away from specific economic environments, our results are applicable across different domains, such as choice of menus, under uncertainty, of intertemporal streams, lotteries, or consumption bundles.

Experimentalists and decision theorists also have an obvious interest in preference recovery from pairwise choices, but little is known about the behavior of preference estimates from finite data. Decision theory works often include a discussion of identification, but presume access to the agent's full preference relation. In contrast, we are interested to know if and when the preference relation can be inferred from the data. 
In demand theory, there are many studies devoted to the problem of identification---known as the integrability problem---assuming access to a demand function defined on all prices. 
\citet{matzkin2006identification} considers economy-wide data, and uses equilibrium as a means to identify consumers' utilities. 
In recent work, \citet{gorno} studies the general problem of identification under partial observability. Gorno provides conditions on decision problems and sets of admissible preferences to ensure identification. Our research diverges from his, and from other studies of identification and integrability, in that we focus on large-sample estimation from finite data.

Another stream of literature combines nonparametric econometric methods with revealed preference theory. In demand analysis, \citet*{blundell2008best} design a statistical test for the revealed preference conditions to be satisfied. Observing that demand responses to price changes can be  represented by a set of moment inequalities, they appeal to results on moment inequality estimators by \citet*{manski2003partial}, \citet*{chernozhukovhongtamer} and \citet*{andrews2009validity}. 
However, these results on partial identification do not apply to the general environments we consider. Instead, we work with the classical large-sample theory\label{ref:R3LST2} for $M$-estimators \citep{amemiya1985advanced,newey1994large} but derive conditions for consistency without making additional compactness and equicontinuity assumptions (these assumptions are particularly substantial in a nonparametric estimation problem like ours, see Section~\ref{subsec:On the connection with econometrics} for a detailed discussion).\label{R2:Halevypage}
\citet*{halevy2018parametric} develop a method for estimating parametric models by minimizing the incompatibility of choice behavior with the proposed model, in the same spirit as our Kemeny-distance estimator. In contrast to our work, their methodology assumes data on choices from linear budgets, and adopts a money-metric version of Afriat's and Varian's ``critical cost efficiency index'' as a measure of distance. A crucial component of their analysis is to decompose measures in loss due to parametric misspecification, and loss due to inconsistency with rationality. More closely related to our paper, \citet{matzkin2003nonparametric} and  \citet*{blundell2010stochastic} consider identification in an econometric model of stochastic demand data (see \citealp{matzkin2007heterogeneous}, for a general discussion).
Recently, \citet*{basu2018learnability} investigate the learnability of four standard models of choice under uncertainty using the notion of Probably Approximately Correct (PAC) learning from computational learning theory.\label{R2:PACpage} \citet*{basu2019learnability} applies several other measures of model complexity to a study of stochastic choice.


Finally, a literature in political science (\citealp*{poolerosenthal1985}, \citealp*{jackman_2001}, \citealp*{clinton_jackman_rivers_2004}, are seminal) focuses on binary choice data (roll-call votes), but considers specific parametric models of spatial voting, and uses Bayesian methods for the most part. Our results are broadly applicable to the same data as in this literature.

\section{Model}\label{sec:Model}

The model features an experimenter, Alice, and a subject, Bob.  Bob has a preference over a set of alternatives $X$, which is a topological space. Alice would like to learn Bob's preference through the device of a choice experiment.\footnote{Such experiments, done on a large scale, with large sample size, include for example \citet*{vonGaudecker2011}, \citet*{chapman2018econographics} and \citet*{falkQJE2018}. Alternatively, we may think of Alice as a researcher, and Bob an individual she has observed in the field. For example, Bob could be a congressman who votes among pairs of competing bills \citep{poolerosenthal1985}.}

By \df{preference relation} or simply \df{preference} we mean a binary relation $\succeq$ over $X$ that is continuous and complete.\footnote{Completeness means that for all pairs of alternatives $(x,y)$, $x \succeq y$ or $y \succeq x$. Continuity means that ${\succeq}$ as a subset of the product space $X \times X$ is closed; more intuitively, if $x$ is not preferred to $y$, then $x'$ is also not preferred to $y'$ for all pairs $(x',y')$ in the vicinity of $(x,y)$. Completeness is standard and continuity is a necessary regularity condition, without it, no meaningful inferences can be made with any finite amount of data.} In formal terms, $\succeq$ is the set of pairs $(x,y) \in X \times X$ such that $x$ is at least weakly preferred to $y$, denoted by $x \succeq y$.
Associated to any given preference ${\succeq}$ are its asymmetric part ${\succ}$ (strict preference) and its symmetric part ${\sim}$ (indifference); that is, $x \succ y$ means that $x \succeq y$ but $y \nsucceq x$, while $x \sim y$ indicates that both $x \succeq y$ and $y \succeq x$. We do not require that preferences be transitive.

If Alice's goal is to infer Bob's preference from his behavior, it is clear that she must somehow  discipline the set of preferences being considered.  With partial observability, it is very easy to find a preference that rationalizes empirical data. For example,  complete indifference rationalizes any observed behavior. 
Throughout the paper, $\P$ denotes the class of preferences being considered; we think of $\P$ as a set that encompasses the possible preferences the subject may have. We refer to a pair $(X,\P)$ as a \df{preference environment}.

Alice collects information about Bob through a finite experiment, in which Bob confronts a fixed number of \df{binary choice problems}. In each binary choice problem, Bob is presented with an unordered pair of alternatives, and is asked to choose exactly one of the two alternatives. An \df{experiment of size $n$} is represented by a collection $\Sigma_n = \{B_1, \dots, B_n\}$, where $B_k = \{x_k,y_k\}$ is an unordered pair of alternatives that captures a binary choice problem. Note that binary choices are used for simplicity: one could have more than two choices instead.

To study the large-sample properties of estimated preferences,\label{ref:R3LST3} we consider not just one experiment, but a set of growing experiments indexed by their size, of the form $\{\Sigma_1, \Sigma_2, \dots\}$, where $\Sigma_n$ is an experiment of size $n$ and $\Sigma_{n} \subset \Sigma_{n+1}$. In the sequel, $\Sigma_n$ always denotes an experiment of size $n$, and the inclusion property $\Sigma_{n} \subset \Sigma_{n+1}$ is implicitly assumed. We use the abbreviated notation $\{\Sigma_n\}$ to denote a set of (growing) experiments.

The behavior of a subject who decides over binary choice problems is encoded in a single-valued choice function $c$ that maps unordered pairs of alternatives to alternatives. It records, for every possible binary choice problem $\{x,y\} \subset X$, the alternative $c(\{x,y\}) \in \{x,y\}$ that is chosen. We refer to $c$ as the \df{choice function}, and impose no a priori restrictions on choice functions. 

We follow two traditions in economic modeling. 
The first tradition is classical revealed preference theory, in which the choice problems of an experiment are selected arbitrarily, and the experimenter seeks to exactly rationalize observed behavior, as in the classical works of \citet{afriat}, \citet{mascolell78}, and \citet{varian1982nonparametric}. In this theory, Bob is assumed to possess a preference and to make choices that comply perfectly with this preference. Alice looks for a preference that fits exactly the empirical observations. However, this theory does not account for errors, while empirical work often tries to accommodate errors. 

The second tradition tackles this problem by imposing a statistical model on the subject's choices. 
The subject is presented with choices drawn at random, either because the experimental design is explicitly random (as, for example, in \citealp*{ahn2014}, \citealp*{choi2014more}, \citealp*{carvalho2016poverty}, or \citealp*{carvalho2017complexity}), or because the experimenter uses observational data in which she has no control over the problems the subject faces. In the statistical tradition, Alice continues to assume that Bob has an underlying preference, but she allows for his behavior to deviate from what his preference dictates. Alice looks for a preference that fits the observed behavior the best.\footnote{See also \citet*{grant2016theory} for a general study of experimental designs that are tolerant to small deviations in the subject's perception of the experiments.}

\subsection{Revealed preference models.}\label{subsec:Revealed preference models}

In a revealed preference model, experiments are designed arbitrarily by the experimenter.
The primitives are the preference environment $(X,\P)$, and the set of experiments $\{\Sigma_n\}$. We refer to this model by the triple $(X,\P,\{\Sigma_n\})$. 

Recall that when presented with a pair of alternatives $\{x,y\}$, Bob is asked to choose \emph{between} $x$ and $y$---he cannot choose both.  In the language of \citet*{CES}, our model of choice features partial observability, as in the original work of \citet{afriat}.\footnote{The tradition in revealed preference theory (and in studies of integrability) prior to Afriat was to exactly rationalize a demand function. In Afriat's model, the observed choices are contained in the rationalizing demand, and in consequence concavity of utility is not testable. See \citet*{CES} for a detailed discussion and exploration of the consequences of partial observability.}
With partial observability, the appropriate concept of rationalization is weak rationalization:
Given a choice function $c$ describing the subject's behavior, and given an experiment $\Sigma_n$, we say that a preference ${\succeq}$ \df{weakly rationalizes} the observed choices on $\Sigma_n$, or simply \df{rationalizes} the observed choices on $\Sigma_n$, if the experiment outcomes are compatible with the subject's preference: for every $\{x, y\} \in \Sigma_n$, $c(\{x, y\}) \succeq x$ and $c(\{x, y\}) \succeq y$. 
Similarly, we say that ${\succeq}$ \df{rationalizes} the choice function $c$ if, for every $x, y \in X$, $c(\{x, y\}) \succeq x$ and $c(\{x, y\}) \succeq y$.
Hence, weak rationalization does not allow for Bob to choose in contradiction with his preference, but allows for Bob not to reveal the totality of what his preference implies.

\subsection{Statistical preference models.}\label{subsec:Statistical preference models}

In a statistical preference model, experiments are composed of randomly-selected choice problems. More precisely, the choice problems $B_1=\{x_1,y_1\}, \dots, B_n=\{x_n,y_n\}$ that make up the experiment $\Sigma_n$ are generated by drawing the alternatives $x_k, y_k$ in each $B_k$ at random from $X$, independently and identically according to some probability measure $\lambda$ ($X$ is endowed with the usual Borel $\sigma$-algebra). We also use $\la$ to denote the product measure on $X\times X$.

Bob's behavior is guided by his preference, but in every choice problem where Bob is not indifferent, he may make a mistake by choosing an alternative that is \emph{not} preferred.
The corresponding choice function is therefore random. It is determined by an \df{error probability function} $q :\P\times X\times X\rightarrow [0,1]$ that quantifies the extent to which a subject is prone to making errors.  Assume that, for all ${\succeq} \in \P$, $q({\succeq}; \cdot,\cdot)$ is measurable on $X \times X$.

When a subject with preference ${\succeq}$ confronts the binary choice problem $\{x,y\}$, he chooses $x$ over $y$ with probability $q({\succeq}; x, y)$, and chooses $y$ over $x$ with the complementary probability. 
We assume that if $x \succ y$ then $x$ is more likely to be chosen, that is, $q({\succeq};x,y) > 1/2$. When $x \sim y$ we make no particular assumption.\footnote{Strictly speaking, an experiment $\Sigma_n$ is now a multiset, and Bob could in principle face the same choice problem more than once. The model would then have to take a stand on whether Bob's choices are constant over such repetitions, or  give rise to independent choice draws. Our assumptions, however, guarantee that repetitions occur with probability zero. So these modeling assumptions are irrelevant.\label{R3:repetition}}


The primitives of a statistical preference model are the preference environment given by $X$ and $\P$, the probability measure $\la$ according to which alternatives are drawn, and the error probability function $q$. We refer to this model by the tuple $(X, \P, \la, q)$.

\section{Main results}\label{sec:Main results}

This section provides general results on the convergence of preferences. Throughout, we use the following notion of convergence: a sequence of preferences $\{ {\succeq}_n\}_{n \in \Na}$ converges to a preference ${\succeq}^*$, written ${\succeq}_n \rightarrow {\succeq}^*$ for short, when the following two conditions are satisfied:
\begin{enumerate}
	\item For all alternatives $x^*,y^*$ with $x^* \succeq^* y^*$, there exists a sequence of pairs of alternatives $\{ (x_{n},y_{n}) \}_{n \in \Na}$ converging to $(x^*,y^*)$ such that $x_{n} \succeq_{n} y_{n}$ for all $n \in \Na$.
	\item For all subsequences $\{ {\succeq}_{n_k} \}_{k \in \Na}$, and all pairs of alternatives $(x^*,y^*)$ that are the limit of a sequence $\{ (x_{n_k},y_{n_k}) \}_{k \in \Na}$ satisfying $x_{n_k} \succeq_{n_k} y_{n_k}$ for all $k$, we have $x^* \succeq^* y^*$.
\end{enumerate}
Under the assumptions we shall impose, these conditions define convergence in the \df{closed convergence topology}. Throughout, we endow the space of preferences and binary relations  with this topology. The closed convergence topology is a common topology for spaces of sets, such as binary relations, and is the standard topology used for spaces of preferences \citep{kannai1970continuity,HILDENBRAND1970161}. It is particularly well suited to the concept of partial observability; we discuss the choice of topology in Section~\ref{subsec:On the convergence of preferences}. 

Under conditions that are satisfied in our model, the closed convergence topology on the space of preferences is metrizable, making it possible to quantify approximations and speak of convergence rates. \emph{We fix, and denote by $\rho$, one compatible metric.}
In particular, if $X$ is compact and metrizable, then we can choose as $\rho$ the usual Hausdorff metric for the product space $X \times X$. The notion of closed convergence then coincides with the notion of Hausdorff convergence.\footnote{Our results allow for $X$ to be only locally compact. In this case, $\rho$ may still be chosen to coincide with the Hausdorff metric on subsets of the product space $X_{\infty} \times X_{\infty}$, where $X_{\infty}$ is the one-point compactification of $X$ together with some metric generating $X_{\infty}$. See \citet{aliprantis2006infinite} for details.}
We connect the topology on preferences with familiar topologies on spaces of utility functions in Section~\ref{sec:Preferences from utilities}, and, for the case of parameterized classes of preferences, with metrics on the parameter space in Section~\ref{lem:compatible-metric}.

\subsection{Convergence in revealed preference models.}\label{subsec:Convergence in revealed preference models}

Our first main result states that convergence of rationalizing preference obtains under certain assumptions on the model primitives. Given a revealed preference model $(X,\P,\{\Sigma_n\})$, consider the following assumptions.

\newlist{condrevealed}{enumerate}{1}
\setlist[condrevealed]{label=\scshape Assumption \arabic*.~~, ref=(\arabic*), wide=25pt,widest=99,leftmargin=*}

\medskip
\begin{condrevealed}
	\item $X$ is a locally compact, separable, and completely metrizable space.\label{cond:1}
\end{condrevealed}
\medskip

Assumption~\ref{cond:1} puts a necessary structure on the set of alternatives. It is satisfied in many common economic environments, as we show in Sections~\ref{sec:Preferences from utilities} and~\ref{sec:Application to monotone preferences}.

The next assumption disciplines the class of the preferences being considered. The central property that allows for  meaningful preference recovery is local strictness. This property rules out ``thick'' indifference curves, in the spirit of the local nonsatiation property of consumer theory. A preference $\succeq$ is \df{locally strict} if for every $x, y \in X$ with $x \succeq y$, and every neighborhood $V$ of $(x,y)$ in $X\times X$, there exists $(x',y')\in V$ with $x' \succ y'$ \citep{border1994dynamic}.

\medskip
\begin{condrevealed}[resume]
	\item $\P$ is a closed set of locally strict preferences. \label{cond:2}
\end{condrevealed}
\medskip

The requirement that $\P$ be closed  is essential. 
For example, suppose that $X=[0,1]$, that $\P$ is the set of all locally strict preferences, and that Bob prefers larger numbers, so that $x \succeq^* y$ if and only if $x \ge y$.
Let $\succsim_k$ be the preference defined by the piece-wise linear utility function $u_k$ with $u_k(0)=u_k(1)=1$ and $u_k(1/k)=0$; so, $u_k(x)=1-k x$ for $x \le 1/k$ and $u_k(x)=-1/(k-1) + k/(k-1) x$ for $x > 1/k$. Note that $\P$ includes $\succeq^*$ and all $\succsim_k$ for $k \ge 2$. 
Consider any set of growing experiments $\{ \Sigma_n \}$ in which the choice problems use non-zero alternatives. In every experiment $\Sigma_n$, Bob's choices, if Bob chooses according to his preference, can be rationalized with $\succsim_k$ for $k$ large enough. As $k$ grows large, $\succsim_k$ does converge, but to a preference distinct from $\succeq^*$, so that there is information regarding Bob's preference that is never fully learned. This problem owes to the fact that the set of locally strict preferences on $[0,1]$ is not closed: the limiting preference $\succsim_\infty$ is such that $x \succsim_\infty y$ if $x \ge y$ or if $x=0$, it is thus not locally strict. 

This simple example illustrates a more general fact. No matter the subject's underlying preference and the choice of growing experiments, one can always find locally strict preferences that perfectly rationalize the observations of the subject who chooses in accordance to his preference, and yet, as the size of the experiment grows large, converge to total indifference among all alternatives.\footnote{See Section~\ref{appsec:Estimation and closed preference sets} in the Online Appendix.} The rationalizing preferences then convey no information on the features of the subject's preference that are not directly observed. Therefore closedness is not a mere technical assumption. It is however satisfied in several important cases. In Sections~\ref{sec:Preferences from utilities} and~\ref{sec:Application to monotone preferences}, we establish that the relevant set locally strict preferences is closed in the major preference environments.

Finally, the choice problems in the  experiments must be sufficiently many, and sufficiently diverse, so that observed behavior on all these choice problems can effectively probe the subject's preference. A set of experiments $\{ \Sigma_n \}$, with $\Sigma_n = \{B_1, \dots, B_n\}$, is called \df{exhaustive} when it satisfies the following two properties:
\begin{enumerate}
	\item $\bigcup_{k=1}^\infty B_k$ is dense in $X$.
	\item For all $x, y \in \bigcup_{k=1}^\infty B_k$ with $x \ne y$, there exists $k$ such that $B_k = \{x,y\}$.
\end{enumerate}
The first property imposes that the alternatives that are used in the set of experiments sample the space of alternatives appropriately. The second property states that the experimenter should be able to elicit the subject's choices over all alternatives used in her experiments. Note that denseness is the only real constraint: starting from a countable dense set of alternatives, one can always construct an exhaustive set of experiments via routine diagonalization arguments.\footnote{In the more general case where choices are made over more than two proposed alternatives, the analog exhaustiveness property would require that any comparison between any two alternatives used in the set of experiments is eventually observed or inferred in a large enough experiment.}

\medskip
\begin{condrevealed}[resume]
	\item $\{\Sigma_n\}$ is exhaustive. \label{cond:3}
\end{condrevealed}
\medskip

The importance of having a dense set of alternatives is clear: without it, the characteristics of the preference remains unobservable on an open set, and for general classes of preferences, knowledge of the preference outside this set does not suffice to infer those unobservable characteristics. 

The importance of local strictness for our results hinges on the fact that, for an exhaustive set of experiments, and any two distinct locally strict preferences $\succeq_A$ and $\succeq_B$ of two subjects $A$ and $B$ respectively, there always is at least one experiment for which subject $A$ behaves differently from subject $B$, thereby allowing the experimenter to distinguish between these two preferences. Thus, with local strictness, a false hypothesis will eventually be demonstrated to be false, whereas without it, too many preferences can be consistent with the data. This fact is stated formally in Lemma~\ref{lem:1}. 

\begin{lemma}
	\label{lem:1}
	Consider an exhaustive set of experiments with binary choice problems $\{x_k, y_k\}$, $k \in \Na$. Let $\succeq$ be any complete binary relation, and $\succeq_A$ and $\succeq_B$ be locally strict preferences. If, for all $k$, $x_k \succeq_A y_k$ and $x_k \succeq_B y_k$ whenever $x_k \succeq y_k$, then ${\succeq}_A={\succeq}_B$.
\end{lemma}
The proof of Lemma~\ref{lem:1} is in Appendix~\ref{app:proof:lem-1}.

Under the above assumptions, we establish the convergence of rationalizing preference estimates.

\begin{theorem}
	\label{thm:1}
	Suppose the revealed preference model $(X,\P,\{\Sigma_n\})$ meets Assumptions~\ref{cond:1}--\ref{cond:3} and $c$ is an arbitrary choice function. 
	If, for every $n$, the preference ${\succeq}_n \in \P$ rationalizes the observed choices on $\Sigma_n$, then there exists a preference ${\succeq}^* \in \P$ such that ${\succeq}_n \rightarrow {\succeq}^*$. Moreover, the limiting preference is unique: if, for every $n$, ${\succeq}_n' \in P$ rationalizes the observed choices on $\Sigma_n$, then the same limit ${\succeq}_n' \rightarrow {\succeq}^*$ obtains.
\end{theorem}
The proof of Theorem~\ref{thm:1} is in Appendix~\ref{app:proof:thm-1}.

Theorem~\ref{thm:1} asserts that if, in each experiment, the data can be rationalized by some preference in the class $\P$, then there always exists one preference in $\P$ that rationalizes the choices made over \emph{all} the experiments, and most importantly, there exists only one such preference.
The observations are exactly as if the subject's choices were guided by this particular preference, which can be obtained as the limit of the rationalizations as experiments grow in size. 

In particular, if we postulate the existence of a preference $\succeq^* \in \P$ according to which the subject chooses on any given decision problem, then no matter the selection of the rationalizing preferences, they always converge to $\succeq^*$.

\begin{corollary}
	\label{cor:1} 
	Suppose the revealed preference model $(X,\P,\{\Sigma_n\})$ meets Assumptions~\ref{cond:1}--\ref{cond:3} and $c$ is an arbitrary choice function.
	If the preference ${\succeq}^* \in \P$ rationalizes $c$ and if, for every $n$, the preference ${\succeq}_n \in \P$ rationalizes the observed choices on $\Sigma_n$, then ${\succeq}_n \rightarrow {\succeq}^*$.
\end{corollary}

\subsection{Convergence in statistical preference models.}\label{subsec:Convergence in statistical preference models}

When the subject makes mistakes, looking for a preference that perfectly rationalizes his behavior is moot---a rationalizing preference in the class $\P$ may not exist. Instead, we introduce a simple estimator that approximately rationalizes the observed data, based on minimization of the Kemeny distance \citep{kendall1938,kemeny1959}. 

The estimator results from a two-step procedure. Let us look at an experiment of size $n$, $\Sigma_n$, drawn at random according to the experimental design. Let $c$ be the choice function that captures the choices of the subject, which is also random. 
First, from the choices observed on $\Sigma_n$, a revealed preference relation is constructed that captures these choices exactly. 
This revealed preference relation, denoted $R_n$, is defined by $x \mathrel{R}_n y$ for all $\{x,y\} \in \Sigma_n$ such that $c(\{x,y\}) = x$---that is, $x \mathrel{R}_n y$ when the subject chooses $x$ in the choice problem $\{x,y\}$. Note that $R_n$ is sparse, as it only conveys information on the alternatives used in $\Sigma_n$. 
Secondly, the estimated preference ${\succeq}_n$ is chosen to minimize the distance $d_n({\succeq}, R_n)$ between the revealed preference relation just defined, and a preference in ${\succeq}\in \P$; 
\begin{equation*}
	{\succeq}_n \in \argmin \{ d_n({\succeq}, R_n) : {\succeq} \in \P \}.
\end{equation*} Distance $d_n$ is taken to be a version of the Kemeny distance, a rank distance measure, defined by 
\begin{equation*}
	d_n ({\succeq},R_n) = \frac{1}{n} \abs{R_n \setminus {\succeq}}.	
\end{equation*}
In words, $d_n({\succeq}, R_n)$ averages the number of mistakes made by the subject on $\Sigma_n$ if his underlying preference is $\succeq$. We refer to this estimator as the \df{Kemeny-minimizing estimator}.\footnote{\label{fn:Kemenydefn}$\abs{R_n \setminus {\succeq}}$ denotes the number of elements in $R_n \setminus {\succeq}$. The Kemeny distance between two finite binary relations $R$ and $R'$ is usually defined as $\abs{R \mathrel{\Delta} R'}$, where $\Delta$ is the symmetric difference. 
Note that if $(x,y)\in {\succeq}\setminus R_n$ and $\{x,y\}\in \Sigma_k$, Then
$(y,x)\in R_n\setminus {\succ}$.  In our model, alternatives are
strictly ranked by $R_n$ and by $\succeq$ with probability one.
Hence, 
\begin{multline*}
\sum_{\{x,y\}\in \Sigma_n} \big(
\one_{(x,y)\in {\succeq}\setminus R_n} + \one_{(x,y)\in R_n\setminus {\succeq}}+
\one_{(y,x)\in {\succeq}\setminus R_n} + \one_{(y,x)\in R_n\setminus {\succeq}}
\big)
\\
= 2 \sum_{\{x,y\}\in \Sigma_n} \big(\one_{(x,y)\in R_n\setminus {\succeq}} +
\one_{(y,x)\in R_n\setminus {\succeq}}\big)    
\end{multline*}
with probability one, which justifies our Kemeny distance terminology.}

\newlist{condstat}{enumerate}{1}
\setlist[condstat]{label=\scshape Assumption \arabic*'.~~, ref=(\arabic*'), wide=25pt,widest=99,leftmargin=*}

For a statistical preference model $(X,\P,\la,q)$, we shall need three assumptions. Assumptions~\ref{cond:1} and~\ref{cond:2} on the preference environment $(X,\P)$ remain unchanged. In particular, we continue to assume that the preferences under consideration are locally strict, local strictness being the key unifying property between the revealed and statistical preference models.

We think of Assumption~\ref{cond:3S}, below, as the analog of Assumption~\ref{cond:3} for randomized experiments. Its main import is that  we almost never draw a decision problem that makes the subject indifferent. It also imposes that the sampling distribution have full support.\footnote{Full support means that there is no proper closed subset of the sample space that has probability 1.} Analogously to Assumption~\ref{cond:3}, this ensures that enough binary choice problems are probed. Full support can be relaxed for preferences that are identified on a proper subset of $X$ (as we do in Section~\ref{subsec:Commodity spaces}).

\medskip
\begin{condstat}[resume]
	\setcounter{condstati}{2}
	\item $\lambda$ has full support on $X$ and, for all ${\succeq} \in \P$, $\{(x,y) : x \sim y\}$ has $\lambda$-probability 0.\label{cond:3S}
\end{condstat}
\medskip

Therefore, under Assumption~\ref{cond:3S}, using the same binary choice problem twice in an experiment occurs with probability zero. Together with Assumption~\ref{cond:2}, Assumption~\ref{cond:3S} guarantees identification in the usual sense: the distribution of the data under the true preference is different than that at any other preference.\footnote{Identification in the usual sense is implied by the identification condition for consistency, proved in Lemma~\ref{lem:uniquemax} of Appendix~\ref{app:proof:thm-2}. Although Assumption~\ref{cond:3S} is crucial, it is by itself is not sufficient; for example, two preferences that differ at one point only cannot be distinguished when $\lambda$ has full support. This case is ruled out by local strictness.} 
This assumption allows to control the problem of partial observability. If instead indifference were to occur with positive probability, then the model would only be partially identified. 

Under the above assumptions, the Kemeny-minimizing estimator is consistent. Recall that $\rho$ denotes the metric on the space of preferences. 

\begin{theorem}
	\label{thm:2}
	Suppose the statistical preference model $(X,\P,\la,q)$ meets Assumptions~\ref{cond:1},~\ref{cond:2} and \ref{cond:3S}, and suppose the subject's preference is ${\succeq}^* \in \P$.
	Let ${\succeq}_n$ denote the Kemeny-minimizing estimator for the $n$-th experiment $\Sigma_n$. Then, $\{ {\succeq}_n \}_{n \in \Na}$ converges to ${\succeq}^*$ in probability; that is, for any $\eta>0$, 
	\[ \lim_{n \rightarrow \infty} \mathbf{Pr}\big( \rho({\succeq}_n,{\succeq}^*) < \eta \big) = 1.\]
\end{theorem}
The proof of Theorem~\ref{thm:2} is in Appendix~\ref{app:proof:thm-2}. One challenge is that the class of preferences may be very rich, which increases the potential for overfitting: the noise in the data may be misinterpreted as being part of the subject's true preference. And indeed, for many common spaces of alternatives, it can be shown that one can rationalize perfectly any finite set of observations by a locally strict preference. Imposing that the class of preferences be closed allows to overcome this difficulty.

We stress that Theorem~\ref{thm:2} requires \emph{no assumption} on the error probability function $q$, other than measurability and asking that the subject be more likely to follow his preference than to make a mistake. Alice may remain agnostic about the dependence of $q$ on the underlying preference $\succeq$ and the alternatives to choose from $x$ and $y$. Moreover, aside from the independence of the draws of alternatives, the requisites on $\la$, stated in Assumption~\ref{cond:3S}, are minimal. In particular, calculating the Kemeny estimator does not require any assumptions on $q$ and $\la$. It requires Alice to specify $\P$, but allows her to be largely agnostic about the rest of the model.


Having the guarantee that preference estimates converge accurately, Alice may want to know how large of a sample is needed to estimate the subject's preference within a given approximation error. Our third result establishes lower bounds on the rate of convergence. 

To state the result, we introduce some terminology. For any $\eta > 0$ and any $\delta \in (0,1)$, let $N(\eta,\da)$ be the smallest value of $N$ such that for all $n \ge N$, and all underlying subject preferences ${\succeq}^* \in \P$,
\[
	\mathbf{Pr}( \rho({\succeq}^n,{\succeq}^*) < \eta ) \geq  1-\da.
\]
(By convention, we let $N(\eta,\da) = \infty$ if no finite value of $N$ exists.) That is, $N(\eta,\da)$ is the size of the smallest experiment such that the probability that the preference estimate is $\eta$-close to the true subject preference is guaranteed to be at least $1-\da$, no matter the true subject preference.

In addition, $\mu(.,{\succeq}^*)$ denotes the probability measure induced on the product space $X\times X$ by $\succeq^*$, $q$ and $\la$, as follows:
\[ 
	\mu(A,{\succeq}^*) = \int_A q({\succeq}^*;x,y) \de \la(x,y).
\]
Loosely speaking, $\mu(x,y,{\succeq}^*)$ represents how likely a subject with preference $\succeq^*$ is to choose $x$ over $y$ in a decision problem randomly drawn. In particular, the value of $\mu({\succeq},{\succeq}^*)$ represents the probability that the choice of a subject with preference $\succeq^*$ over a randomly drawn decision problem is consistent with the preference $\succeq$.
 
Finally, for $\eta > 0$, we let
\begin{equation}
	\label{eq:eta-definition}
	r(\eta) = \inf \big\{ \mu({\succeq}^*,{\succeq}^*) - \mu({\succeq},{\succeq}^*) : {\succeq},{\succeq}^* \in \P, \rho({\succeq}^*, {\succeq}) \ge \eta \big\}.	
\end{equation}
Roughly, the value of $r(\eta)$ captures the smallest possible probability that a subject make a choice that is perfectly consistent with his own preference but is inconsistent with a preference at least $\eta$-distant.

To obtain convergence rates, we appeal to Vapnik-Chervonenkis theory.\label{R2:VCpage} Given a preference environment $(X,\P)$, for $n \in \Na$, let $S_n$ be the largest size of the sets
\[
    \big\{ ( \one_{x_1 \succeq y_1}, \dots,  \one_{x_n \succeq y_n}) \in \{0,1\}^n : {\succeq} \in \P \big\}
\]
over all binary choice problems $\{ x_k, y_k \} \subset X$ for $k=1,\dots,n$. 
We always have $S_n \le 2^n$, and, if $\P$ is rich enough, we may have $S_n = 2^n$. 
The VC dimension of $\P$ (abbreviation for Vapnik-–Chervonenkis dimension) is then defined as the maximum value of $n$ such that $S_n = 2^n$, and is infinite if no such maximum exists. 

The consistency of the Kemeny-minimizing estimator applies no matter the VC dimension of $\P$, but when $\P$ has a finite VC dimension, and so is not too rich, we can, in addition, obtain uniform bounds on the convergence rates.

\begin{theorem}
	\label{thm:3}
	If the statistical preference model $(X,\P,\la,q)$ meets Assumptions~\ref{cond:1},~\ref{cond:2},and~\ref{cond:3S}, and if $\P$ has a finite VC dimension, then\footnote{The big O notation refers to the usual asymptotic upper bound.} 
	\[
        N(\eta, \delta) = O\left(\frac{1}{r(\eta)^2} \ln \frac{1}{\delta} \right).
    \]
\end{theorem}

The proof of Theorem~\ref{thm:3} is in Appendix~\ref{app:proof:thm-3}, in which we also provide a more refined nonasymptotic bound. 
Of course, the value of $r(\eta)$ depends on the specific preference environment being considered. Below in Sections~\ref{sec:Preferences from utilities} and~\ref{sec:Application to monotone preferences}, we apply Theorem~\ref{thm:3} in different environments. \citet*{basu2018learnability} compute the VC dimension of some common models of choice, they show, in particular, that the class of expected utility, Choquet expected utility, and two-state max-min preferences have finite VC dimension.

\section{Preferences from utilities}\label{sec:Preferences from utilities}

In this section and the next, we show that the assumptions of our general framework are valid in a variety of important preference environments.
This section handles preference environments derived from collections of utility functions. 
Specifically, we show how our assumptions may be derived from conditions on the utility representations under consideration, rather than directly imposing the assumptions on a family of preferences.

Consider a set of alternatives $X$. We are interested in sets of preferences $\P$ that correspond to sets of utility functions.
A \df{utility function} is any function $u : X \rightarrow \Re$. We endow the space of utility functions with the topology of compact convergence.\footnote{A sequence of functions $f_n : X \rightarrow \Re$, $n \in \Na$, converges compactly to a function $f$ if and only if it converges uniformly to $f$ on every compact set $K \subseteq X$. This topology on utilities is commonly used in the literature; see for example \citet*{mas1974continuous} and \citet*{border1994dynamic}.} For any utility function $u : X \rightarrow \Re$, let $\Phi(u)$ denote the preference induced by $u$, that is, the binary relation defined by $x \mathrel{\Phi(u)} y$ if and only if $u(x) \geq u(y)$. And for any set of utility functions $\U$, let $\Phi(\U) = \{ \Phi(u) : u \in \U \}$ denote the image of $\U$. 

The following proposition states conditions on the set of utility functions being considered under which our main results apply.

\begin{proposition}
	\label{prop:utility-revealed}
	Suppose $X$ satisfies Assumption~\ref{cond:1}, $\U$ is a compact set of continuous utility functions, and every preference in $\Phi(\U)$ is locally strict. Then the class of preferences $\P \equiv \Phi(\U)$ meets Assumption~\ref{cond:2}.
\end{proposition}

Proposition~\ref{prop:utility-revealed} is an immediate implication of Theorem 8 of \citet{border1994dynamic}, who establish the continuity of $\Phi$ (see Appendix~\ref{app:proof:utility-revealed}).

\subsection{Intertemporal consumption.}\label{subsec:Intertemporal consumption}

To put Proposition~\ref{prop:utility-revealed} to work in a simple concrete example, let us look at a case of intertemporal choice. 
Suppose a good can be consumed at $d \ge 2$ different dates $t_1 < \dots < t_d$. In this environment, an alternative is a vector of the Euclidean space $\Re_{+}^d$ whose $i$-th entry indicates the amount consumed at the date $t_i$.

Fix $a, b \in \Re_{++}$ with $a < b$ and call $\mathcal{V}$ the set of continuous functions $v:\Re_+ \rightarrow \Re$ that satisfy, for all $x < y$,
\begin{equation*}
	a \cdot (y-x) \le v(y) - v(x) \le b \cdot (y-x).
\end{equation*}
We interpret $v(x)$ as the utility for an immediate consumption of quantity $x$ of the good. The above inequality constrains marginal utilities to be positive and bounded above and below.

Denote by $\U$ the set of the utility functions $u$ over $\Re_{+}^d$ that are written
\begin{equation*}
	u(x_1, \dots, x_d) = \sum_{i=1}^d \delta_i v(x_i),
\end{equation*}
where $v \in \mathcal{V}$, and $\delta = (\delta_1, \dots, \delta_d) \in [\varepsilon,1]^d$ is a vector of discount factors, with $\varepsilon$ an arbitrarily small positive lower bound. So, the set $\U$ captures discounted utility preferences with general discount factors.

Compactness of $\U$ follows directly from the Arzel\`{a}-Ascoli theorem (for example, Theorem 6.4 of \citealp{dugundji}). And clearly each utility function describes a locally strict preference, because if an individual with utility $u \in \U$ prefers the consumption vector $x \in \Re_+^d$ to $y \in \Re_+^d$, then he strictly prefers the consumption vector $x + \eta \one$ to $y$, for any $\eta > 0$. 
Therefore, we have the following corollary to Proposition~\ref{prop:utility-revealed}.

\begin{corollary}
	\label{cor:intertemporal-consumption} 
	In the intertemporal-consumption environment just described, the set of alternatives $X \equiv \Re_{+}^d$ endowed with the Euclidean topology meets Assumption~\ref{cond:1}, and the class of preferences $\P$ that is induced by the utility functions in $\U$ meets Assumption~\ref{cond:2}.
\end{corollary}

For example, one possible use of our theory is to recover discount factors from the data, or to check for distortions with respect to standard models such as exponential discounting.

\subsection{Expected utility preferences.}\label{subsec:Expected utility preferences}

Expected utility preferences are an important case of preferences derived from utility functions.
Let $\Pi \equiv \{\pi_1, \ldots, \pi_d \}$ be a collection of $d \ge 2$ prizes, and let $\Delta^{d-1}$ denote the $(d-1)$-dimensional simplex $\{ p \in \Re^d_+ : p_1 + \cdots + p_d = 1\}$. Think of each element $p$ of the simplex as a lottery over the prizes in $\Pi$, with $p_i$ the probability of getting $\pi_i$. The set of alternatives is $\Delta^{d-1}$, endowed with the Euclidean topology.

An \df{expected utility preference} stands for any preference $\succeq$ on $\Delta^{d-1}$ defined by a vector of utility indices $v \in \Re^d$, with the property that $p \succeq p'$ if and only if $v \cdot p \geq v \cdot p'$, where $v \cdot p = \sum_{i=1}^d v_i p_i$ is the expected utility of lottery $p$. This preference is \df{nonconstant} if there is at least one pair $p, p' \in \Delta^{d-1}$ for which $p \succ p'$, or equivalently, if the vector of utility indices satisfies $v_i \ne v_j$ for some $i, j$.
Of course, the vector of utility indices that defines a nonconstant expected utility preference is not unique. 
To ensure uniqueness, we  normalize the utility indices by imposing that indices sum to zero and $v$ be on the unit sphere. That is, each preference of $\P$ is uniquely associated to a normalized vector of utility indices in the set\footnote{$\norm{\cdot}$ denotes the Euclidean norm $\norm{x - y} \equiv \sqrt{\sum_{i=1}^d (x_i - y_i)^2}$.}
\begin{equation*}
	\pc{v \in \Re^d : \sum_{i=1}^d v_i=0, \, \|v\|=1}.
\end{equation*}

The following result asserts that the standard expected utility model is situated within our framework.


\begin{corollary}
	\label{cor:expected-utility-revealed} 
	In the expected-utility environment just described, the set of alternatives $X \equiv \Delta^{d-1}$ endowed with the Euclidean topology meets Assumption~\ref{cond:1}, and the class $\P$ of all nonconstant expected utility preferences meets Assumption~\ref{cond:2}.
\end{corollary} 
The proof of Corollary~\ref{cor:expected-utility-revealed} is in Appendix~\ref{app:proof:expected-utility-revealed}.

Note that, although we gather data over the entire set of relevant alternatives---here the whole simplex---we could use a smaller set, because expected utility preferences are identified on a small set of lotteries. By learning the preference on this smaller set, we can infer uniquely the preference on the full set.
For example, suppose the alternatives that make the set of binary choice problems are from a subset of the simplex $\Delta^{d-1}$ that is convex and compact with non-empty interior, and that the family of binary choice problems are exhaustive relative to that subset. Then, Theorem~\ref{thm:1} continues to hold even though preferences continue to be defined over $\Delta^{d-1}$, that is, the rationalizing preference estimates continue to converge to a unique limiting preference. A similar observation applies to Theorems~\ref{thm:2} and~\ref{thm:3}.

In the expected-utility model, we can further refine the convergence rates of Theorem~\ref{thm:3}, provided that we restrict attention to error probability functions $q$ that are polynomially bounded in the following sense: there exists $C > 0$ and $k > 0$ such that,
if lottery $p$ is strictly preferred to lottery $p'$ according to preference $\succeq$, then
\begin{equation}
	\label{eq:q-restricted-1}
	q({\succeq};p,p') \ge \frac{1}{2} + C |u \cdot p - u \cdot p'|^k,
\end{equation}
where $u$ is the normalized vector of utility indices associated with preference $\succeq$. Equation~\eqref{eq:q-restricted-1} is satisfied, for example, if $q({\succeq};p,p')$ is lower-bounded strictly above $1/2$ over all $\succeq$ and all $p,p'$ such that $p \succ p'$. It is also satisfied if the probability of making an error decreases with the difference of utilities between the two lotteries, for example if, for a continuously differentiable and nondecreasing function $f:\Re_{+} \rightarrow \Re_{+}$ with $f'(0) > 0$, we can write $q({\succeq};p,p') = 1/2 + f(u \cdot p - u \cdot p')$ for all $\succeq$ and all $p,p'$ such that $p \succ p'$ ($u$ continues to denote the utility indices associated $\succeq$). In the former case, we can use $k=0$, and in the latter case, $k=1$. 

Applying Theorem~\ref{thm:3} also requires to specify a compatible metric on preferences.
Because the set of lotteries is compact and metrizable, one could measure the distance between two preferences by the Hausdorff metric.
Here however, each preference is naturally represented by a finite-dimensional vector of utility indices, and it can be more straightforward to set, as distance between preferences, the distance between their associated utility indices. 
So, let $\rho({\succeq}, {\succeq}')$ be the Euclidean distance between the two normalized vectors of utility indices of ${\succeq}$ and ${\succeq}'$. It can be seen that $\rho$ is a compatible metric (this fact holds quite generally, see Section~\ref{subsec:On parameterized sets of preferences}). 

The focus on error probability functions of the above form then allows for explicit convergence rates for the Kemeny-minimizing estimator, as below.
\begin{corollary}
	\label{cor:expected-utility-stat} 
	For the statistical preference model $(X,\P,\la,q)$, where $X \equiv \Delta^{d-1}$, $\P$ is the set of all nonconstant expected utility preferences, $\la$ is the uniform distribution on $\Delta^{d-1}$, and $q$ satisfies Equation~\eqref{eq:q-restricted-1}, the Kemeny-minimizing estimator is consistent and, as  $\eta \to 0$ and $\da \to 0$,
	\begin{equation*}
		N(\eta,\da) = O\pp{ \frac{1}{\eta^{8(d-1) + 4k} } \ln \frac{1}{\da} }.
	\end{equation*}
\end{corollary} 
One can rewrite Corollary~\ref{cor:expected-utility-stat} to provide a convergence rate of the form $O_p((1/n)^{1/d})$.\footnote{The $O_p$ notation refers to the stochastic boundedness notation.} The proof of Corollary~\ref{cor:expected-utility-stat} is in Appendix~\ref{app:proof:expected-utility-stat}.

Note that the uniform distribution is not at all essential for Corollary~\ref{cor:expected-utility-stat}. It just yields a particularly simple closed form for the bound on $r(\eta)$ used in Theorem~\ref{thm:3}.

\section{Application to monotone preferences}\label{sec:Application to monotone preferences}

In many economic settings, it is safe to posit the existence of a universal ordering, by which some alternatives are ranked above others by all the individuals of the relevant population: preferences are monotone with respect to this ordering.  For example, for the classical consumption environment in which individuals choose bundles of goods, it is usually assumed that individuals strictly prefer to have more of each good. In laboratory experiments, it is common to assume some form of objective ranking, for instance when enforcing single-switching in price lists, or when using randomization devices to enforce incentives.
Monotonicity with respect to such universal orderings turns out to be a very useful discipline on preferences. 

In this section, we adopt the following terminology. Fix a set of alternatives $X$. We call \df{dominance relation} any binary relation $\rhd$ on $X$ that is not reflexive, that is, for each $x \in X$, $x \ntriangleright x$. 
The relation $\rhd$ is said to be \df{open} when $\rhd$ is an open set in the product space $X \times X$.\footnote{That is, for each $x, y$ with $x \rhd y$, there exists a neighborhood $V$ of $(x,y)$ in $X \times X$ such that for all $(x',y') \in V$, $x' \rhd y'$.}
Being open for a dominance relation can be interpreted as a continuity property, saying that if $x$ dominates $y$ then this domination extends locally around the alternatives $x$ and $y$.

Given a dominance relation $\rhd$, a preference relation $\succeq$ is \df{strictly monotone} with respect to $\rhd$ if, for each $x,y \in X$, $x \rhd y$ implies $x \succ y$. Having a class of strictly monotone preferences captures the above idea that some alternatives are universally preferred to some others in accordance to the dominance relation.

Usually, strict monotonicity alone does not suffice to ensure that the preference is locally strict, the first crucial condition in our framework. It helps to add a notion of transitivity. We call a preference relation $\succeq$ \df{Grodal-transitive} if for all $x,y,z,w \in X$, $x \succeq y \succ z \succeq w$ implies $x \succeq w$. Named after \citet{grodal1974note}, Grodal-transitivity is weaker, and so more permissive, than the usual notion of transitivity. Importantly, together with strict monotonicity, Grodal-transitivity makes the class of preferences closed, the second crucial condition of our framework.

\begin{lemma}
	\label{lem:monotone-closed} 
	Suppose Assumption~\ref{cond:1} is met and $\rhd$ is a dominance relation on $X$. If $\rhd$ is open, then the class of preferences that are Grodal-transitive  and strictly monotone with respect to $\rhd$ is closed.
\end{lemma}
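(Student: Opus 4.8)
The plan is to show that $\P$ is sequentially closed in the closed convergence topology, which suffices because under Assumption~\ref{cond:1} this topology is metrizable. Concretely, I take a sequence $\{{\succeq}_n\}$ in $\P$ with ${\succeq}_n \to {\succeq}^*$ and verify that the limit ${\succeq}^*$ is again a preference that is Grodal-transitive and strictly monotone with respect to $\rhd$. The limit ${\succeq}^*$ is automatically a closed subset of $X\times X$ (hence a continuous relation), so the three things left to check are completeness, strict monotonicity, and Grodal-transitivity. Throughout I use only the two defining conditions of closed convergence: the lower-limit condition, that every $(x^*,y^*)\in{\succeq}^*$ is the limit of pairs $(x_n,y_n)$ with $x_n\succeq_n y_n$; and the upper-limit condition, that any limit of pairs $(x_{n_k},y_{n_k})$ satisfying $x_{n_k}\succeq_{n_k}y_{n_k}$ must lie in ${\succeq}^*$.

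Completeness and strict monotonicity are short. For completeness, fix $x,y$; since each ${\succeq}_n$ is complete, at least one of $x\succeq_n y$ and $y\succeq_n x$ holds for infinitely many $n$, and applying the upper-limit condition to the corresponding constant sequence yields $x\succeq^* y$ or $y\succeq^* x$. For strict monotonicity, suppose $x\rhd y$. Each ${\succeq}_n$ is strictly monotone, so $x\succ_n y$, and the constant sequence gives $x\succeq^* y$ by the upper-limit condition. To obtain strictness, suppose toward a contradiction that $y\succeq^* x$; the lower-limit condition then supplies $(y_n,x_n)\to(y,x)$ with $y_n\succeq_n x_n$. But $\rhd$ is open and $x\rhd y$, so $x_n\rhd y_n$ for all large $n$, whence strict monotonicity of ${\succeq}_n$ forces $x_n\succ_n y_n$, contradicting $y_n\succeq_n x_n$. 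Hence $x\succ^* y$. This is the single place where the openness of $\rhd$ is used.

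The substantive step is Grodal-transitivity. Suppose $x\succeq^* y\succ^* z\succeq^* w$; I want $x\succeq^* w$. The natural idea is to approximate each weak comparison via the lower-limit condition and then invoke Grodal-transitivity of ${\succeq}_n$, but this requires the approximating pairs to chain together at the middle alternatives $y$ and $z$, which is not immediate. I would resolve this coordination problem as follows. From $x\succeq^* y$ and $z\succeq^* w$, the lower-limit condition supplies sequences $(x_n,y_n)\to(x,y)$ with $x_n\succeq_n y_n$ and $(z_n,w_n)\to(z,w)$ with $z_n\succeq_n w_n$. The key observation is that $y\succ^* z$ gives $z\not\succeq^* y$, and the contrapositive of the upper-limit condition then rules out any subsequence with $z_{n_k}\succeq_{n_k}y_{n_k}$ along $z_n\to z$, $y_n\to y$; since each ${\succeq}_n$ is complete, this forces $y_n\succ_n z_n$ for all large $n$. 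Thus for all large $n$ we have the chain $x_n\succeq_n y_n\succ_n z_n\succeq_n w_n$, and Grodal-transitivity of ${\succeq}_n$ yields $x_n\succeq_n w_n$. Since $(x_n,w_n)\to(x,w)$, the upper-limit condition gives $x\succeq^* w$, completing the argument.

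I expect the main obstacle to be exactly this coordination in the Grodal-transitivity step: naively chaining the three lower-limit approximations fails because the approximating sequences need not agree at $y$ and $z$. The resolution is the observation that $z\not\succeq^* y$ upgrades \emph{any} approximation $y_n\to y$, $z_n\to z$ to an eventual strict comparison $y_n\succ_n z_n$, which is precisely what lets the endpoint approximations coming from $x\succeq^* y$ and $z\succeq^* w$ be glued together. With that in hand, the remaining verifications reduce to routine applications of the two closed-convergence conditions.
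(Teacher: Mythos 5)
Your proof is correct and follows essentially the same route as the paper's: verify completeness via subsequences and the upper-limit condition, strict monotonicity by contradiction using the lower-limit condition and openness of $\rhd$, and Grodal-transitivity by approximating the outer comparisons and upgrading $y \succ^* z$ to an eventual strict comparison $y_n \succ_n z_n$ along the approximating sequences. The only difference is presentational: the paper asserts that eventual strict comparison in one line, whereas you spell out its justification (completeness of ${\succeq}_n$ plus the contrapositive of the upper-limit condition), which is a welcome clarification rather than a deviation.
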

The proof of Lemma~\ref{lem:monotone-closed} is in Appendix~\ref{app:proof:monotone-closed}.

It is worth noting that, in general, closedness is not achieved under the usual notion of transitivity and strict monotonicity. If one wishes to impose transitivity, the class of preferences must be reduced further to obtain a closed set (as we did in the example of Section~\ref{subsec:Expected utility preferences}). Of course, there is no harm in having a more generous class of preferences. Even if the class includes preferences that fails desirable properties such as classical transitivity---and so may include irrelevant preferences---preference estimates are guaranteed to converge to the correct underlying preference, so that any violation by the preference estimates eventually gets corrected in the limit. 

The main benefit of Grodal-transitivity is that it is enough to make strictly monotone preferences locally strict under relatively mild conditions.

\begin{lemma}
	\label{lem:monotone-locallystrict} 
	Suppose Assumption~\ref{cond:1} is satisfied and $\succeq$ is a preference strictly monotone with respect to the dominance relation $\rhd$. If, for each $x \in X$, there exists $y, z \in X$ arbitrarily close to $x$ and such that $y \rhd x$ and $x \rhd z$, then $\succeq$ is locally strict. 
\end{lemma}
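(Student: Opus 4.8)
The plan is to reduce local strictness to a one-sided perturbation in each coordinate and then splice the resulting strict comparisons together using the Grodal-transitivity of $\succeq$ (the property singled out in the discussion immediately preceding the lemma). Since $X$ is metrizable by Assumption~\ref{cond:1}, the product topology on $X\times X$ has a base of product sets, so every neighborhood of a pair $(x,y)$ contains a basic neighborhood $V_1\times V_2$, with $V_1$ a neighborhood of $x$ and $V_2$ a neighborhood of $y$. It therefore suffices to show that for every $(x,y)$ with $x\succeq y$ and every such $V_1\times V_2$ there is a pair $(a,b)\in V_1\times V_2$ with $a\succ b$. (Equivalently, since completeness together with the closedness of $\succeq$ makes $\succ=\{(a,b):\neg(b\succeq a)\}$ an open set, the claim is that $\succeq\subseteq\overline{\succ}$.) The case $x\succ y$ is immediate, as $(x,y)$ itself is then a strict pair in the neighborhood, so the content is in the case $x\sim y$.

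First I would apply the density hypothesis in each coordinate separately. Applied at $x$ it produces $a\in V_1$ with $a\rhd x$; applied at $y$ it produces $b\in V_2$ with $y\rhd b$. Strict monotonicity upgrades both dominations to strict preferences, $a\succ x$ and $y\succ b$, which combined with $x\succeq y$ yields the chain $a\succ x\succeq y\succ b$. When $x\ne y$ one may take $V_1$ and $V_2$ disjoint so that automatically $a\ne b$; in the degenerate case $x=y$ one still selects $a\ne b$ (density furnishes arbitrarily close dominating and dominated points). The whole argument now turns on converting the chain into the single strict comparison $a\succ b$.

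This conversion is where I would invoke Grodal-transitivity. For the weak part, two applications suffice, padding with reflexive links (which exist by completeness): from $a\succeq a\succ x\succeq y$ one gets $a\succeq y$, and then from $a\succeq y\succ b\succeq b$ one gets $a\succeq b$. For strictness I argue by contradiction: if $b\succeq a$, then the chain $x\succeq y\succ b\succeq a$ gives $x\succeq a$ by Grodal-transitivity, contradicting $a\succ x$ (which asserts $\neg(x\succeq a)$). Hence $\neg(b\succeq a)$, and together with $a\succeq b$ this gives $a\succ b$. Since $(a,b)\in V_1\times V_2$, the desired strict pair in the neighborhood has been exhibited.

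The main obstacle—and the reason some transitivity is indispensable—is exactly this last chaining step. Strict monotonicity and the density hypothesis only let me nudge $x$ upward and $y$ downward to \emph{genuine} $\rhd$-comparisons, but the perturbed points $a$ and $b$ need bear no $\rhd$-relation to one another, so the strictness of $a\succ b$ cannot be read off from monotonicity directly; it has to be extracted from the chain $a\succ x\succeq y\succ b$. Completeness and continuity alone do not exclude a cyclic configuration in which $b\succeq a$ holds alongside $a\succ x\succeq y\succ b$, which would defeat the argument; Grodal-transitivity is precisely the form of transitivity that rules this out while still tolerating the intransitive indifference the framework wishes to permit. I would therefore flag the two places a careful write-up must not cut corners: the harmless use of reflexivity to pad the Grodal chains, and the selection of $a\ne b$ so that the conclusion $a\succ b$ is not vacuously obstructed by the irreflexivity of $\succ$.
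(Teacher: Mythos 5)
Your proof is correct and takes essentially the same route as the paper's: both perturb via the hypothesis to get $x' \rhd x$ and $y \rhd y'$ nearby, upgrade these to $x' \succ x$ and $y \succ y'$ by strict monotonicity, and then use Grodal-transitivity (with completeness supplying reflexivity and the contradiction step) to convert the chain $x' \succ x \succeq y \succ y'$ into $x' \succ y'$. The only difference is presentational---the paper first proves that the strict upper contour sets $U_x = \{z : z \succ x\}$ are totally ordered by inclusion and then reads off $x' \succ y'$ from $U_x \subseteq U_{y'}$, whereas you chain directly---and you were right to import Grodal-transitivity despite its absence from the lemma's literal statement: the paper's own proof relies on it in exactly the same way, and the lemma is false without it.
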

The proof of Lemma~\ref{lem:monotone-locallystrict} is in Appendix~\ref{app:proof:monotone-locallystrict}.

Therefore, Assumption~\ref{cond:2} on the class of preferences is valid as long as $X$ is well behaved, the dominance relation is open, and the preferences considered are Grodal-transitive and strictly monotone. The examples below show that these properties are satisfied in many common preference environments.

Our discussion is summed up in the following result, which we shall see has several applications. 

\begin{proposition}
	\label{prop:monotone-preferences}
	Suppose that $X$ satisfies Assumption~\ref{cond:1}, and is endowed with an open dominance relation $\rhd$. Suppose that $\P$ consists of the preferences that are Grodal-transitive and strictly monotone with respect to $\rhd$. Suppose further that for each $\succeq\in \P$ and $x \in X$, there exists $y, z \in X$ arbitrarily close to $x$ such that $y \rhd x$ and $x \rhd z$. Then $\P$ meets Assumption~\ref{cond:2}. 
\end{proposition} 

\subsection{Commodity spaces.}\label{subsec:Commodity spaces}

The classical setup of consumer demand analysis features a commodity space over $d \ge 2$ goods, where consumers get to choose over bundles of goods (as in \citealp{afriat}, \citealp{mascolell78}, or \citealp{varian1982nonparametric}). The set of alternatives is the Euclidean space $\Re_{++}^d$, the $i$-th entry of vector $(x_1, \dots, x_d)$ is interpreted as the consumed quantity of the $i$-th good. This environment is part of our framework when preferences are asked to satisfy a monotonicity condition.

Consider the dominance relation $\gg$ on $\Re_{++}^d$ by $x \gg y$ exactly when $x_i > y_i$ for all $i = 1, \dots, d$. An individual whose preference is strictly monotone with respect to $\gg$ means that this individual strictly prefers to have more of every good, a postulate that appears reasonable in many situations, and that is common in economic models.
It is evident that the relation $\gg$ is open, and for any $x \in \Re_{++}^d$, $x + \varepsilon \one \gg x$ for all $\varepsilon > 0$ while $x \gg x - \varepsilon \one \in \Re_{++}^d$ for all small enough $\varepsilon > 0$. Hence, Lemmas~\ref{lem:monotone-closed} and~\ref{lem:monotone-locallystrict} apply, and we get Proposition~\ref{prop:commodity-spaces-revealed}.\footnote{While the set $\Re_{++}^d$ is not complete under the Euclidean metric, there exists a compatible complete metric by Alexandroff's Theorem (Theorem 24.12 of \citealp{willard}). Of course, $\Re_{++}^d$ is also locally compact and separable, and hence Assumption~\ref{cond:1} is satisfied.} 

\begin{corollary}
	\label{prop:commodity-spaces-revealed} 
	In the commodity-space environment just described, the set of alternatives $X \equiv \Re_{++}^d$ endowed with the Euclidean topology meets Assumption~\ref{cond:1}, and the class $\P$ of all preferences that are Grodal-transitive and strictly monotone with respect to $\gg$ meets Assumption~\ref{cond:2}. 
\end{corollary} 

The same set of alternatives can be used to describe state-contingent payments, with an objective public distribution over states, and where the $i$-th entry of a vector encodes the payment received in the state $i$. In such an environment, one may want to test the validity of the hypothesis that individuals maximize an expected utility function (as, for example, in \citealp{green1986expected}), or maximize a utility function that is monotone with respect to first-order stochastic dominance (as in \citealp*{nishimura}). Because both classes of preferences are more restrictive than the class $\P$ considered here, our convergence results continue to apply, which means that we can fully recover preferences and examine precisely the validity of these hypotheses.

\subsection{Choice over menus.}\label{subsec:Choice over menus}

Our next application deals with recovering preferences over menus, following \citet{kreps1979representation} and \citet*{dlr}. Let $\Pi=\{\pi_1, \dots, \pi_d\}$ be a collection of prizes and let $\Delta_{++}^{d-1}$ be the interior of the $(d-1)$-dimensional simplex, interpreted as the set of full-support distributions over the elements of $\Pi$. We endow $\Delta_{++}^{d-1}$ with the Euclidean metric. 

Let $\M$ denote the set of closed convex subsets of $\Delta_{++}^{d-1}$ with nonempty interior. We interpret $\M$ as a set of menus of lotteries.  A subject who possesses a menu $m \in \M$ gets to choose a lottery in $m$, and subsequently receives a prize drawn according to this lottery. The convexity of menus that is assumed here is also implied by the axiom of indifference to randomization introduced by \citet{dlr}. We endow $\M$ with the Hausdorff topology, as is standard in menu theory.  

We define the dominance relation $\sqsupset$ as follows: for two menus $m_A$ and $m_B$, $m_A \sqsupset m_B$ if every expected-utility decision maker with full knowledge of her utility when making menu choices would strictly prefer $m_A$ to $m_B$. More precisely, we write
\begin{equation*}
	\U \equiv \left\{ u \in \Re^d : \sum_{i=1}^d u_i = 0, \|u\|=1 \right\}
\end{equation*}
the set of all utility indexes over prizes, up to a normalization (we rule out the trivial preference that is indifferent between any two lotteries). Then we write $m_A \sqsupset m_B$ if and only if for every $u \in \U$, 
\begin{equation*}
	\sup_{p\in m_A} u \cdot p > \sup_{p \in m_B} u \cdot p.
\end{equation*}
Since we restrict attention to convex menus, $A \sqsupset B$ implies $A \supset B$. Hence, the dominance relation $\sqsupset$ is similar to, but weaker than, the set-containment relation traditionally used in models of choice over menus. In particular, monotonicity with respect to $\sqsupset$ is less demanding than monotonicity with respect to $\supset$.

Our next result establishes that the revealed preference framework applies to the menu preference environment.

\begin{corollary}
	\label{cor:preferences-menus-revealed} 
	In the menu environment just described, the set of alternatives $X \equiv \M$ endowed with the Hausdorff topology meets Assumption~\ref{cond:1}, and the class $\P$  of all preferences that are Grodal-transitive and strictly monotone with respect to $\sqsupset$ meets Assumption~\ref{cond:2}.
\end{corollary}
	
The proof of Corollary~\ref{cor:preferences-menus-revealed} is in Appendix~\ref{app:proof:preferences-menus-revealed}.

\subsection{Intertemporal consumption revisited.}\label{subsec:Intertemporal consumption revisited}

\newcommand{\symm}{\mathrel{>}_{\textrm{sym}}}

We revisit the example in Section~\ref{subsec:Intertemporal consumption}. There is a good to be consumed at a sequence of $d \ge 2$ increasing dates $t_1, \dots, t_d$. The set of alternatives is now $\Re_{++}^d$, and each element $(x_1, \dots, x_d)$ gives the amount consumed at each date.

As before, we can hypothesize that individuals prefer more of the good early over less later. In the present environment, this postulate is captured by the dominance relation $\ggg$ on $\Re_{++}^d$ whereby $x \ggg y$ if and only if for every $k$,
\begin{equation*}
	\sum_{i=1}^k x_i > \sum_{i=1}^k y_i.
\end{equation*} This dominance relation also captures the impatience assumption implicit in exponential discounting.
In the same environment, \citet*{nishimura} suggest another postulate: that individuals are neutral to time while they still prefer to get more of the good. The associated dominance relation $\symm$ is defined as $x \symm y$ if and only if there exists a permutation $\sigma$ over $\{1, \dots, d\}$ such that for every $i$, $x_{\sigma(i)} > y_{\sigma(i)}$.

It is immediately seen that $\ggg$ is open, and that, when $x \in \Re_{++}^d$ and $\varepsilon > 0$ is small enough, $x \ggg x - \varepsilon \one$ and $x + \varepsilon \one \ggg x$. The very same observations apply to the relation $\symm$. By a logic that is now routine, we get the following proposition.

\begin{corollary}
	\label{cor:intertemporal} 
	In the intertemporal consumption environment just described, the set of alternatives $X \equiv \Re_{++}^d$ endowed with the Euclidean topology meets Assumption~\ref{cond:1}, and the class $\P$ of all preferences that are Grodal-transitive and strictly monotone with respect to $\ggg$ (or with respect to $\symm$) meets Assumption~\ref{cond:2}.
\end{corollary}

\subsection{Choice over lotteries.}\label{subsec:Choice over lotteries}

\newcommand{\fsd}{\mathrel{>}_{\textrm{FSD}}}

Consider a set of $d \ge 2$ monetary rewards $\Pi \equiv \{\pi_1, \dots, \pi_d\}$, and let the interior of the $(d-1)$-simplex, $\Delta_{++}^{d-1}$, be the set of alternatives endowed with the Euclidean topology. We interpret an element $p \in \Delta_{++}^{d-1}$ as a full-support lottery over monetary rewards. This choice domain is an instance of the domain studied in Section~\ref{subsec:Expected utility preferences}.

Suppose that the elements of $\Pi$ are ordered as $\pi_1 < \dots < \pi_d$. A natural dominance relation is strict first-order stochastic dominance, noted $\fsd$, where $p \fsd p'$ if and only if for all $k = 1, \dots, d-1$,
\begin{equation*}
	\sum_{i=1}^k p_i < \sum_{i=1}^k p_i'.
\end{equation*}
The reason for using \emph{strict} first-order stochastic dominance is that this relation is open, as opposed to first-order stochastic dominance. Now, let $p \in \Delta_{++}^{d-1}$. For a small enough positive $\varepsilon$, we can define $p' \in \Delta_{++}^{d-1}$ by $p_i' = p_i - \varepsilon$ for all $i \le d-1$ and $p_d' = p_d + (d-1) \varepsilon$. Then $p' \fsd p$. And similarly, when instead $p_i' = p_i + \varepsilon$ for all $i \le d-1$ and $p_d' = p_d - (d-1) \varepsilon$, $p \fsd p'$.  As a consequence of Proposition~\ref{prop:monotone-preferences}, we obtain the following result.

\begin{corollary}
	\label{cor:choice-over-lotteries} 
	In the lottery  environment just described, the set of alternatives $X \equiv \Delta_{++}^{d-1}$ endowed with the Euclidean topology meets Assumption~\ref{cond:1}, and the class $\P$ of all preferences that are Grodal-transitive and strictly monotone with respect to $\fsd$ meets Assumption~\ref{cond:2}.
\end{corollary}

Note that the class of nonconstant expected-utility preferences studied in Section~\ref{subsec:Expected utility preferences} and the class considered in the present lottery choice environment are distinct, and neither one is a refinement of the other.

\section{Concluding discussion}\label{sec:Concluding discussion}

This paper deals with the question of recovering individual preferences from observed choice data, when the data consist of a finite number of binary comparisons. Decision theorists often consider the question of ``backing out'' a model from data on pairwise choices, motivated by laboratory experiments, in which pairwise choices are common. They assume, however, rich and infinite data sets.  Econometricians study the convergence of preference estimates, but their models usually differ from the pairwise choice paradigm. Moreover, the conditions needed for consistency of their estimates are imposed as added-on assumptions, instead of being derived from the properties of the economic model under consideration. 

We provide a common unifying framework. We show that the class of preferences considered should be locally strict. Under local strictness, and some regularity conditions, any preference that rationalizes the observed pairwise choices converges to the correct data-generating preference. In the statistical counterpart to our model, the  Kemeny-minimizing estimator, which outputs the preferences that best fit the data, is consistent.  In addition, convergence rates can be obtained while remaining largely agnostic on the sampling method and the error probability function. 

Our results require weak assumptions and apply to a broad range of standard preference environments. We conclude with a discussion on a few aspects of our model.

\subsection{On the convergence of preferences.}\label{subsec:On the convergence of preferences}

At a general level, the topology of closed convergence is defined by the property that individuals with comparable preferences behave similarly on closely related decision problems. Such continuity property appears natural, and even necessary to be able to learn from finite data. 
In formal terms, if $x \succ y$ for some alternatives $(x,y)$ and a preference ${\succeq}$, and if $(x', y')$ are alternatives in a neighborhood of $(x,y)$, then the topology of closed convergence is defined exactly so that $x' \succ' y'$ for any preference ${\succeq'}$ close enough to ${\succeq}$.\footnote{To be even more formal, under the assumptions of our results, the closed convergence topology is the smallest topology for which the set $\{ (x,y,{\succeq}) : x \succ y\}$ is open in the product topology; see Theorem 3.1 of \citet{kannai1970continuity}.} 

This topology also has the property that, if the experimenter learns that the subject prefers (at least weakly) $x$ to $y$ through her experiments---because the subject chooses $x$ when presented with the pair $\{x,y\}$---this is also reflected in the limiting preference, when it exists: if for some $N \in \Na$ and $x, y \in X$, we have $x \succeq_n y$ for all $ n \ge N$, and if ${\succeq}_n \rightarrow {\succeq}^*$, then $x \succeq^* y$. And our model allows for the possibility that certain parts of the subject's preference remain unobserved. For example, the experimenter cannot learn, from a finite number of observations, that $x$ is strictly preferred to $y$, although she may learn that $x$ is weakly preferred. In this case, if we can still ensure a unique limiting preference ${\succeq}^*$, then it means that correct inferences about missing observations have been made. The closed convergence topology is therefore well suited to the concept of weak rationalization.

\subsection{On the connection with econometrics.}\label{subsec:On the connection with econometrics}

Theorem~\ref{thm:2} is an instance of the consistency of M-estimators. Consistency is well known to rely on three properties of the econometric environment (see, for example, Theorem 4.1.1 of \citealp{amemiya1985advanced}, or Theorems 2.1 and~3.1 of \citealp{newey1994large}).
The first is that the true parameter has to be a unique extremum of the population version of the objective function. 
We prove this property in Lemma~\ref{lem:uniquemax}. The second is the uniform convergence of the sample objective function to the population version. We do not need to assume this property in an ad-hoc fashion; instead, we are able to derive it from our model primitives. Finally, the canonical results on M-estimators need that the parameter space is compact. The topology we use on the space of preferences guarantees its compactness. So, even though our estimation problem is fully nonparametric, we are able to work with a compact space of parameters and we do not need to assume uniform convergence or stochastic equicontinuity.

The most related work in the econometrics literature is \citet*{chernozhukovhongtamer}, who present consistent estimators for partially identified models from moment conditions. We differ from their work in that they provide a general methodology for parametric estimation, while we are specifically interested in the approximation of preferences from pairwise comparisons, and our problem is nonparametric. Because their methodology aims at being general, their main consistency result (Theorem 3.1) assumes the uniform convergence property (part of condition C1). It is not derived from the revealed preference questions that motivate their study. Our consistency result also depends on an analogous uniform convergence property (as mentioned above, this is true generally of the consistency of M-estimators), but it is obtained as a consequence of the primitives of our model. Obtaining consistency directly from the model primitives is the key contribution of Theorem~\ref{thm:2}.  In certain environments, revealed preference conditions are representable by means of moment inequalities, and the results of \citeauthor{chernozhukovhongtamer} can be applied, \citet*{blundell2008best} is a notable instance of such an application in the context of demand analysis. 

\subsection{On parameterized sets of preferences.}\label{subsec:On parameterized sets of preferences}

The metric $\rho$ used in our results can be any metric on the space of preferences that is compatible with the topology of closed convergence. 
For instance, for compact sets of alternatives that satisfy Assumption~\ref{cond:1}, we can use for $\rho$ the Hausdorff distance between subsets of $X \times X$. 

When the space of preferences $\P$ is parameterized, one may wish to work with a metric on preference parameters, such as the usual Euclidean distance when parameters are finite-dimensional vectors.  
Doing so is possible when the space of preferences and the space of parameters are homeomorphic, as formalized by the following lemma.
\begin{lemma}
    \label{lem:compatible-metric}
    Let $\Theta$ be a compact space of parameters endowed with a metric $D$, and $\phi$ the mapping that sends every parameter in $\Theta$ to a preference in $\P$. Suppose that $\phi$ is one-to-one and onto, and is continuous. If $\rho({\succeq}, {\succeq}')$ is defined as $D(\theta, \theta')$, with $\theta, \theta'$ the parameters respectively associated with ${\succeq}, {\succeq}'$, then $\rho$ is a compatible metric on $\P$.
\end{lemma}
Lemma~\ref{lem:compatible-metric} follows directly from the observation that, under the above conditions, the inverse of $\phi$ is continuous by compactness of $\Theta$.\footnote{Hence, any set of preferences that is open in the topology of closed convergence contains an open ball under the metric $\rho$, and conversely, any open ball of preferences under the metric $\rho$ contains an open set in the topology of closed convergence. Therefore, the topology of closed convergence and the metric topology induced by $\rho$ are identical.}



To illustrate this result in a concrete case, let us return to the expected-utility preference environment of Section~\ref{subsec:Expected utility preferences}. Recall that the set of alternatives $X$ is the set of lotteries over a finite collection of prizes $\{\pi_1, \ldots, \pi_d\}$, represented as the $(d-1)$-dimensional simplex $\Delta^{d-1}$, and $\P$ is the set of nonconstant expected utility preferences.
Each preference of $\P$ is naturally parameterized by its normalized vector of utilities: using the notation in Lemma~\ref{lem:compatible-metric}, let the space of parameters be 
\begin{equation*}
    \Theta = \pc{v \in \Re^d : \sum_{i=1}^d v_i=0, \, \|v\|=1},
\end{equation*}
and $D$ be the Euclidean distance.

Let $\phi$ the mapping that, to each $v \in \Theta$, associates the expected utility preference relation $\succeq$ defined as $p \succeq p'$ if and only if $v \cdot p \ge v \cdot p'$. Let us write $\phi(v) = \Phi(U_v)$, where $U_v$ is the utility function on $X$ defined by $U_v(p) = v \cdot p$, and, as in Section~\ref{sec:Preferences from utilities}, $\Phi(u)$ denotes the preference induced by utility function $u: X \mapsto \Re$.
First, we observe that $v \mapsto U_v$ is continuous, when, as in Section~\ref{sec:Preferences from utilities}, the space of utility functions is endowed with the topology of compact convergence. Second, we observe that $\Phi$ is continuous by Theorem 8 of \citet{border1994dynamic} (see Appendix~\ref{app:proof:utility-revealed}), since by Corollary~\ref{cor:expected-utility-revealed}, $\Phi(u)$ is locally strict when $u$ is defined as $u(p) = v \cdot p$ for $v \in \Theta$. Therefore Lemma~\ref{lem:compatible-metric} applies: if $\rho({\succeq}, {\succeq}')$ is defined as the Euclidean distance between the utility indices of ${\succeq}$ and ${\succeq}'$ respectively, then $\rho$ is a compatible metric on $\P$.

We use this fact in Proposition~\ref{cor:expected-utility-stat} to derive convergence rates of estimated preferences in terms of the Euclidean distance on utility indexes.

\subsection{On the transitivity of preferences.}\label{subsec:On the transitivity of preferences}

Aside from the assumption of local strictness, our framework applies to very general classes of preferences. In particular, it applies to preferences without classical rationality hypotheses, such as transitivity.  Still, one may wish to look for rationalizing preferences that are transitive. While it is perfectly reasonable to focus on transitive preferences, one must interpret Theorem~\ref{thm:1} with care. Even when all the preferences that rationalize the observed behavior for a set of experiments can be chosen to be transitive, there is no guarantee that the limiting preference is transitive, even if Assumption~\ref{cond:1} is satisfied. This fact owes to an example of \citet{grodal1974note}.

Adapted to our context, Grodal's example proceeds as follows. Figure~\ref{fig:nontrans} exhibits a nontransitive relation borrowed from \citet*{grodal1974note}, with $X=\Re_{++}^2$ (say, $X$ is a commodity space with two goods). The lines depict indifference curves. All the green indifference curves intersect at one point: $(1/2,1/2)$. Aside from the point $(1/2,1/2)$, $x$ is at least as good as $y$ if and only if it lies on a (weakly) higher indifference curve.  But, all bundles on an indifference curve passing through $(1/2,1/2)$ are indifferent to $(1/2,1/2)$.  This feature makes the preference nontransitive; specifically, the indifference part of the preference is intransitive here. Let $\succeq^*$ denote this preference.

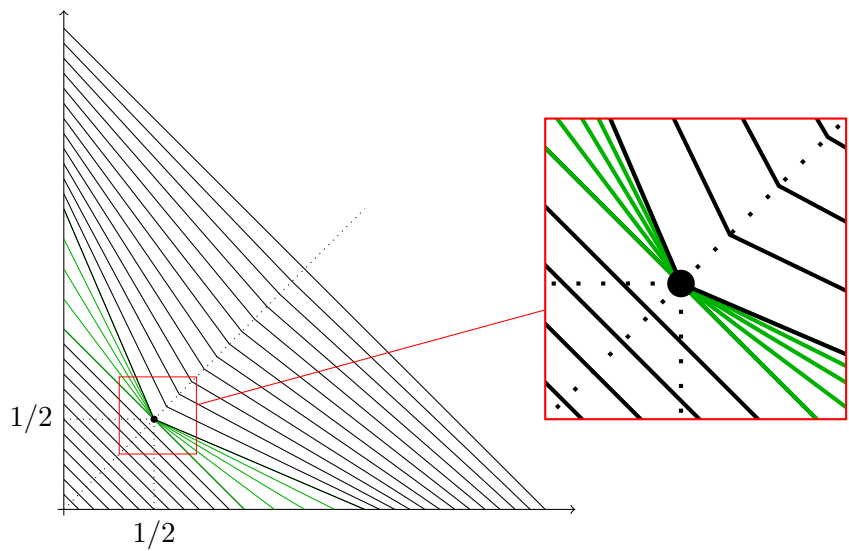
\begin{figure}
\begin{center}
\begin{tikzpicture}[scale=.8, spy using outlines={magnification=3.9, size=4cm, connect spies}]
	\draw[->] (0,-.1) -- (0,8.3) node[anchor=west] {};
	\draw[->] (-.1,0) -- (8.5,0)  node[anchor=north] {};
	\draw[dotted] (0,0) -- (5,5);
	\foreach \i in {0,.25,...,3} {\draw[-,thin] (\i,0) -- (0,\i);}
	\foreach \i in {3,3.5,...,5} {\draw[-,thin,green!70!black] (\i,0) -- (1.5, 1.5) -- (0,\i);}
	\foreach \i in {5,5.25,...,8} {\draw[-,thin] (\i,0) -- (-8/3 + 2.5*\i/3,-8/3 + 2.5*\i/3) -- (0, \i);}
	\draw [fill] (1.5,1.5) circle (.05cm);
	\draw [thin, dotted] (0,1.5) node[anchor=east] {\small $1/2$}-- (1.5,1.5);
	\draw [thin, dotted] (1.5,0) node[anchor=north] {\small $1/2$} -- (1.5,1.5);
	\spy [red] on (1.25,1.25) in node [right] at (8,4);
\end{tikzpicture}\end{center}
\caption{A non-transitive preference}\label{fig:nontrans}
\end{figure}

Imagine an ordered collection of binary choice problems that do not include the alternative $(1/2,1/2)$. Suppose that this collection is either finite or infinite but countable, as the set used in our growing experiments. 
Then for every $n$ there is a ball around $(1/2,1/2)$ that does not contain any alternative in the first $n$ binary choice problems. Consider the preferences pictured in the diagram of Figure~\ref{fig:nontransball}. Compared to the relation depicted in Figure~\ref{fig:nontrans}, the preferences of Figure~\ref{fig:nontransball} have been modified close to $(1/2,1/2)$ so that transitivity holds. Thus, one can construct a sequence of strictly monotone preferences, $\succeq_n$, $n \in \Na$, where each $\succeq_n$ is transitive, and ${\succeq}_n \rightarrow {\succeq}^*$, but $\succeq^*$ is not transitive.

\begin{figure}
\begin{center}
\begin{tikzpicture}[scale=.8, spy using outlines={magnification=3.9, size=4cm, connect spies}]
	\draw[->] (0,-.1) -- (0,8.3) node[anchor=west] {};
	\draw[->] (-.1,0) -- (8.5,0)  node[anchor=north] {};
	\draw[dotted] (0,0) -- (5,5);
	\draw[-,green!70!black,name path=m1] (5,0) -- (1.5, 1.5) ;
	\draw[-,green!70!black,name path=m2] (1.5, 1.5) -- (0,5);
	\draw[-,green!70!black,name path=n1] (3.8,0) -- (1.5, 1.5) ;
	\draw[-,green!70!black,name path=n2] (1.5, 1.5) -- (0,3.8);
	\draw [name path=arcone] (1.8,1.5) arc (0:180:.3cm);
	\draw [name path=arctwo] (1.8,1.5) arc (360:180:.3cm);
	\draw [fill, white] (1.5,1.5) circle (.3cm);
	\foreach \i in {0,.25,...,3} {\draw[-,thin] (\i,0) -- (0,\i);}
	\foreach \i in {5.25,5.5,...,8} {\draw[-,thin] (\i,0) -- (-8/3 + 2.5*\i/3,-8/3 + 2.5*\i/3) -- (0, \i);}
	\draw [blue,thin,fill, fill opacity=.2]  (1.5,1.5) circle (.3cm);
	\path [name intersections={of=n1 and arctwo,by=l1}];
	\path [name intersections={of=n2 and arcone,by=l2}];
	\path [name intersections={of=m1 and arctwo,by=o1}];
	\path [name intersections={of=m2 and arcone,by=o2}];
	\draw [green!70!black] (l1) -- (l2);\draw [green!70!black] (o1) -- (o2);
	\draw [fill] (1.5,1.5) circle (.05cm); 
	\spy [red] on (1.25,1.25) in node [right] at (8,4);
\end{tikzpicture}
\end{center}
\caption{A transitive preference}\label{fig:nontransball}
\end{figure}
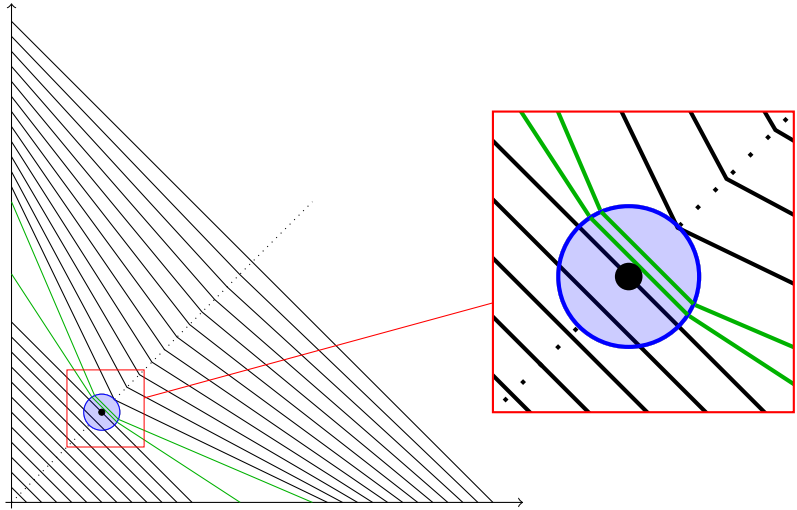

It is generally true that if each $\succ_n$ (the strict part of $\succeq_n$) is transitive, then $\succ^*$ will be transitive as well (see \citealp{grodal1974note}), but in some cases one may desire full transitivity of $\succeq^*$.\footnote{Relations for which the strict part is transitive are usually called quasitransitive, and they possess many of the useful properties possessed by transitive relations. For example, continuous quasitransitive relations possess maximums on compact sets \citep{bergstrom1975}.} The central element of the example above is that the indifference curves get ``squeezed'' together too rapidly. 

There are several ways out of Grodal's example, if we wish to obtain a transitive limiting preference $\succeq^*$. We have seen a few in Sections~\ref{sec:Preferences from utilities} and~\ref{sec:Application to monotone preferences}. A rather general  approach involves Lipschitz conditions on the class of preferences being considered.  

Let us apply the Lipschitz approach to the environment described in Section~\ref{sec:Preferences from utilities}. 
Fix $X=\Re_+^d$ as the set of alternatives, such as a commodity space with $d$ goods. Also fix $a,b\in\Re_{++}$ with $a < b$, and consider the class of utility functions $\U$ defined as the set of all the continuous utility functions $u: \Re_+^d \rightarrow \Re$ such that, for all $i$, and all $x_i, y_i \in \Re_+$ with $x_i < y_i$,
\begin{equation*}
	a \cdot (y_i-x_i)\leq u(y_i,x_{-i})- u(x_i,x_{-i})\leq b \cdot (y_i-x_i)
\end{equation*}
for all $x_{-i} \in \Re_+^{d-1}$.
Hence, each utility function in $\U$ is Lipschitz-bounded above and below. Clearly, every such utility function also describes a transitive, locally strict preference, because $a$ is positive. And by the Arzela-Ascoli Theorem (Theorem 6.4, p. 267, of \citealp{dugundji}), $\U$ is compact. Therefore one can appeal to Proposition~\ref{prop:utility-revealed}, and if each rationalizing preference $\succeq_n$ of the $n$-th experiment of an exhaustive sequence is included in $\Phi(\U)$, then the limiting preference exists and is a member of $\Phi(\U)$, and so is transitive.

\pagebreak

\appendix

\section{Proof of Lemma \ref{lem:1}}\label{app:proof:lem-1}

Suppose, by means of contradiction, that there exist $x,y \in X$ for which $x \succeq_A y$ and $y \succ_B x$.  By continuity of $\succeq_B$ and local strictness of $\succeq_A$, we can assume without loss that $x \succ_A y$ and $y \succ_B x$. Then, by continuity of $\succeq_A$ and $\succeq_B$, and by denseness of the collection $\{x_k, y_k : k \in \Na\}$, there exists $a,b \in \{x_k, y_k : k \in \Na\}$ such that $a \succ_A b$ and $b \succ_B a$.  However, by completeness of ${\succeq}$, either $a \succeq b$, contradicting the implication $a \succeq b \implies a \succeq_B b$, or $b \succeq a$, contradicting the implication $b \succeq a \implies b \succeq_A a$.

\section{Proof of Theorem \ref{thm:1}}\label{app:proof:thm-1}

The proof utilizes Lemma~\ref{lem:1} and the following two elementary lemmas.

\begin{lemma}
	\label{lem:inclusion}
If $A \subseteq X\times X$, then $\{{\succeq} \in X \times X : A \subseteq {\succeq} \}$ is closed.
\end{lemma}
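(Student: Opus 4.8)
The plan is to reduce closedness to sequential closedness and then invoke only the ``upper'' (limit-superior) half of the closed convergence characterization recorded in Section~\ref{sec:main-results}. Under Assumption~\ref{cond:1} the closed convergence topology on binary relations is metrizable, so a set is closed precisely when it is sequentially closed; it therefore suffices to check that whenever ${\succeq}_n \to {\succeq}^*$ with $A \subseteq {\succeq}_n$ for every $n$, one also has $A \subseteq {\succeq}^*$.

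To carry this out I would fix an arbitrary pair $(a,b) \in A$ and exhibit it as a ``stable'' element of the sequence. Since $A \subseteq {\succeq}_n$ for all $n$, we have $a \succeq_n b$ for every $n$, so the constant sequence $(a_n,b_n) \equiv (a,b)$ satisfies $a_n \succeq_n b_n$ for all $n$ and converges trivially to $(a,b)$. Applying the second condition in the definition of ${\succeq}_n \to {\succeq}^*$ (with the whole sequence playing the role of the subsequence) immediately gives $(a,b) \in {\succeq}^*$, that is, $a \succeq^* b$. As $(a,b) \in A$ was arbitrary, this yields $A \subseteq {\succeq}^*$, which is the desired sequential closedness.

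There is essentially no genuine obstacle here---the lemma is elementary, and the entire content is that membership is preserved under limits because a constant sequence converges to its own value. The only points deserving a word of care are bookkeeping ones: that we are entitled to test closedness sequentially, which rests on the metrizability of the topology under Assumption~\ref{cond:1}, and that the first (``lower'') convergence condition plays no role at all. Should one prefer to avoid metrizability, the same argument runs verbatim with nets against the definition of the closed convergence (Fell) topology, since a constant net likewise converges to its value; this makes the statement hold for arbitrary $A \subseteq X \times X$ without any closedness requirement on $A$ itself.
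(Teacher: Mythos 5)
Your proof is correct and follows essentially the same route as the paper's: the paper likewise takes a sequence ${\succeq}_n \to {\succeq}$ with $A \subseteq {\succeq}_n$, observes that $x \succeq_n y$ for every $(x,y) \in A$, and concludes $x \succeq y$ via the upper (limit) condition of closed convergence applied to the constant sequence. Your additional remarks on metrizability justifying the sequential test, and the net-based variant, are sound but just make explicit what the paper leaves implicit.
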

\begin{proof}  
	Let $\{ {\succeq}_n \}_{n \in \Na}$ be a sequence in the set $\{{\succeq} \in X \times X  : A \subseteq {\succeq} \}$ such that ${\succeq}_n \rightarrow {\succeq}$.  Then for all $(x,y) \in A$, we have $x \succeq_n y$, hence $x \succeq y$.  So $(x,y)\in {\succeq}$.
\end{proof}

\begin{lemma}\label{lem:compact}
    The set of all continuous binary relations on $X$ is a compact metrizable space.
\end{lemma}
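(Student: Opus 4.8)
The plan is to identify the space of continuous binary relations with the hyperspace of closed subsets of $X \times X$ under the closed convergence topology, and then invoke the classical fact that such a hyperspace is compact and metrizable whenever the ground space is locally compact, separable, and metrizable. First I would note that, by Assumption~\ref{cond:1}, $X$ is locally compact, separable, and completely metrizable; each of these three properties is preserved under finite products, so $X \times X$ inherits them. Since a binary relation $\succeq$ is continuous precisely when it is a closed subset of $X \times X$, the set of all continuous binary relations is exactly the hyperspace $\mathcal{F}(X\times X)$ of closed subsets of $X \times X$, and the closed convergence topology is by construction the topology carried by this hyperspace.

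The substantive step is the compact metrizability of $\mathcal{F}(X \times X)$. To make this concrete I would pass to the one-point compactification $Z = (X\times X)_\infty$. Because $X \times X$ is locally compact, Hausdorff, and second countable (separability together with metrizability yields second countability), $Z$ is a compact metrizable space; fix a compatible metric $d$ on $Z$. The hyperspace of nonempty closed subsets of the compact metric space $(Z,d)$, equipped with the Hausdorff metric, is itself a compact metric space (Blaschke selection). The map $F \mapsto F \cup \{\infty\}$ sends a closed subset of $X\times X$ to a closed subset of $Z$ containing the point at infinity, and is a bijection onto $\{K \in \mathcal{F}(Z) : \infty \in K\}$, with inverse $K \mapsto K \cap (X\times X)$; note that adjoining $\infty$ makes every image nonempty, so the Hausdorff metric is well defined on the range. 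This range is closed in $\mathcal{F}(Z)$: if $K_m \to K$ in the Hausdorff metric with $\infty \in K_m$ for all $m$, then $d(\infty,K) = \lim_m d(\infty,K_m) = 0$, whence $\infty \in K$. Being closed in a compact metric space, $\{K : \infty \in K\}$ is compact and metrizable.

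The main obstacle is verifying that $F \mapsto F \cup \{\infty\}$ is a homeomorphism for the relevant topologies, i.e.\ that closed convergence of relations in $X\times X$ corresponds exactly to Hausdorff convergence of their closures in $Z$. This requires treating separately the compact and noncompact relations and checking how the added point interacts with sequences escaping to infinity; it is precisely the correspondence recorded in the footnote of Section~\ref{sec:main-results} and established in \citet{aliprantis2006infinite}. Once this homeomorphism is in hand, transporting compactness and metrizability from $\{K : \infty \in K\}$ back to $\mathcal{F}(X\times X)$ completes the proof. Alternatively, one may invoke the hyperspace theorem for locally compact, second-countable Hausdorff spaces directly from \citet{aliprantis2006infinite}, after verifying its hypotheses as in the first paragraph.
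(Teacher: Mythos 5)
Your proposal is correct, but it is doing something different from the paper: the paper's entire proof of this lemma is a citation (``See Theorem 2 in Chapter B of \citet{hildenbrand2015core}, or Corollary 3.95 of \citet{aliprantis2006infinite}''), whereas you essentially reconstruct the proof of that cited corollary. Your route --- identify continuous relations with closed subsets of $X\times X$, pass to the one-point compactification $Z=(X\times X)_\infty$, embed the hyperspace of closed sets into the nonempty closed subsets of $Z$ via $F\mapsto F\cup\{\infty\}$, use Blaschke selection for compactness of the Hausdorff-metric hyperspace, and check that $\{K:\infty\in K\}$ is closed --- is exactly the standard argument behind the Aliprantis--Border result, and all of your verifications (products preserving local compactness, separability and metrizability; well-definedness of the bijection; the estimate $d(\infty,K)\le d_H(K,K_m)+d(\infty,K_m)\to 0$; nonemptiness of images so the Hausdorff metric applies) are sound. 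The one step you do not prove, the homeomorphism between closed convergence on closed subsets of $X\times X$ and Hausdorff convergence in $Z$, you explicitly defer to the same reference, so your argument is a partial unpacking rather than a fully independent proof; your closing sentence (invoke the hyperspace theorem directly after checking hypotheses) is precisely what the paper does. What the paper's version buys is brevity. What your version buys is transparency: it makes explicit where each hypothesis of Assumption~\ref{cond:1} is used, and it makes concrete the identification of the metric $\rho$ with the Hausdorff metric on the one-point compactification that the paper only records in a footnote of Section~\ref{sec:main-results}. Either version is acceptable as a proof of the lemma.
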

\begin{proof}
	See Theorem 2 in Chapter B of \citet{hildenbrand2015core}, or Corollary 3.95 of \citet{aliprantis2006infinite}.
\end{proof}

We now return to the main proof of Theorem \ref{thm:1}. 
By assumption, $\P$ is closed, and hence compact as a closed subset of a compact space by Lemma~\ref{lem:compact}. 

Let $\succeq'$ be any complete binary relation (not necessarily in $\P$) such that for all $n$ and all $\{x,y\} \in \Sigma_n$, $x=c(\{x,y\})$ if and only if $x \succeq' y$ ($\succeq'$ is guaranteed to exist because the experiments are nested, $\Sigma_n\subseteq \Sigma_{n+1}$ for all $n$). Similarly, let $\succeq_n'$ be the revealed preference relation that captures the observations made on the experiment $\Sigma_n$, that is, $x \succeq_n' y$ if and only if there is $\{x,y\} \in \Sigma_n$ with $x = c({x,y})$.

For every $n \in \Na$, let $P_n = \{{\succeq} \in \P : {\succeq}'_n \subseteq {\succeq} \}$ be the set of relations in $\P$ that rationalize the observed choices on $\Sigma_n$. 
Lemma~\ref{lem:inclusion} implies that $P_n$ is closed, and hence
compact. Thus, $\{ P_n \}_{n \in \Na}$ constitutes a decreasing sequence of closed sets lying in the compact set $P$, and by the finite intersection property, $\bigcap_{n \in \Na} P_n \neq \varnothing$.  

So let ${\succeq}^* \in \bigcap_{n \in \Na} P_n$.  We claim that $\bigcap_{n \in \Na} P_n = \{ {\succeq}^* \}$. Take any ${\succeq} \in \bigcap_{n \in \Na} P_n$. By definition, for any binary choice problem $\{x,y\} \in \bigcup_{n \in \Na} \Sigma_n$, if $x \succeq' y$ then $x \succeq^* y$ and $x \succeq y$. Hence Lemma~\ref{lem:1} implies ${\succeq} = {\succeq}^*$.

The result now follows as for each $n \in \Na$, $P_n$ is compact, and $\bigcap_{n \in \Na} P_n =\{{\succeq}^*\}$.

\section{Proof of Theorem \ref{thm:2}}\label{app:proof:thm-2}

Throughout this proof, $\succeq^*$ is the subject's preference, $\succeq_n$ is the Kemeny-minimizing estimator for the $n$-th experiment $\Sigma_n$, and $R_n$ is the revealed preference relation for that $n$-th experiment as described in Section~\ref{sec:Main results}. Since $\succeq^*$ is fixed throughout, we write $\mu({\succeq},{\succeq}^*)$ as $\mu({\succeq})$.

To simplify notation, we also let $\bar d_n({\succeq}, R_n) = \frac{1}{n} \abs{{\succeq}\cap R_n}$, noting that  
\begin{equation*}
	d_n ({\succeq},R_n) = \frac{1}{n} \abs{R_n \setminus {\succeq}} = 1 - \bar d_n({\succeq}, R_n). 	
\end{equation*}
In particular, $\succeq_n$ maximizes ${\succeq} \mapsto \bar d_n({\succeq}, R_n)$.

The proof makes use of the following three lemmas.

\begin{lemma}\label{lem:uniquemax}
    For any preference ${\succeq}$ in the class $\P$, if ${\succeq}$ and ${\succeq}^*$ are distinct, then $\mu({\succeq}^*)>\mu({\succeq})$.
\end{lemma}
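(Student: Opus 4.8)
The plan is to prove that the population criterion $\mu(\cdot)=\mu(\cdot,{\succeq}^*)$ is uniquely maximized at ${\succeq}^*$ by computing the difference $\mu({\succeq}^*)-\mu({\succeq})$ directly and showing it is strictly positive whenever ${\succeq}\neq{\succeq}^*$. First I would invoke Assumption~\ref{cond:3S}: since the indifference sets of both $\succeq$ and ${\succeq}^*$ are $\la$-null, I may replace each relation by its strict part up to a $\la$-null set. Because both preferences are complete and continuous, each strict part is open, as $x\succ y\iff\neg(y\succeq x)$ is the complement of a closed set. Writing $S^*=\{x\succ^* y\}$, $S=\{x\succ y\}$, and $\mu(A)=\int_A q({\succeq}^*;x,y)\,\de\la$, this gives
\[
	\mu({\succeq}^*)-\mu({\succeq})=\int_{B}q({\succeq}^*;x,y)\,\de\la-\int_{C}q({\succeq}^*;x,y)\,\de\la,
\]
where $B=S^*\setminus S=\{x\succ^* y,\ y\succ x\}$ and $C=S\setminus S^*=\{y\succ^* x,\ x\succ y\}$ are the two regions of strict disagreement.

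Next I would exploit the symmetry of $\la$. As $\la$ is a product measure, the coordinate swap $T(x,y)=(y,x)$ preserves $\la$ and carries $B$ bijectively onto $C$, so a change of variables rewrites the second integral as $\int_B q({\succeq}^*;y,x)\,\de\la$. Substituting the identity $q({\succeq}^*;y,x)=1-q({\succeq}^*;x,y)$ then yields
\[
	\mu({\succeq}^*)-\mu({\succeq})=\int_B\big(2\,q({\succeq}^*;x,y)-1\big)\,\de\la.
\]
On $B$ we have $x\succ^* y$, so the modeling assumption that the subject is strictly more likely to follow his preference gives $q({\succeq}^*;x,y)>1/2$, making the integrand strictly positive. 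Consequently $\mu({\succeq}^*)>\mu({\succeq})$ as soon as $\la(B)>0$.

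The crux, and the step I expect to require the most care, is to establish $\la(B)>0$; this is where local strictness and full support enter. Since ${\succeq}\neq{\succeq}^*$ and both are complete, one relation fails to contain the other, say there is $(x_0,y_0)\in{\succeq}^*\setminus{\succeq}$, so that $x_0\succeq^* y_0$ while completeness of $\succeq$ forces $y_0\succ x_0$ (the reverse case lands symmetrically in $C$, with $\la(C)=\la(B)$ by the swap). The open set $\{y\succ x\}$ contains $(x_0,y_0)$, and local strictness of ${\succeq}^*$ produces, in every neighborhood of $(x_0,y_0)$, a pair $(x_1,y_1)$ with $x_1\succ^* y_1$; choosing it inside $\{y\succ x\}$ places it in $B$. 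Since $B=\{x\succ^* y\}\cap\{y\succ x\}$ is an intersection of open sets, it is open and nonempty, so full support of $\la$ (Assumption~\ref{cond:3S}) yields $\la(B)>0$ and completes the argument. The delicate point is precisely this upgrade of a \emph{weak} disagreement at $(x_0,y_0)$ into a \emph{strict} one via local strictness, together with the observation that the strict part of a complete, continuous preference is open.
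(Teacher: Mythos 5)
Your proof is correct and follows essentially the same route as the paper's: both reduce $\mu({\succeq}^*)-\mu({\succeq})$ to an integral over the region of strict disagreement by discarding the $\la$-null indifference sets and applying the coordinate-swap change of variables, and both get strict positivity from $q > 1/2$ on ${\succ}^*$ combined with completeness, continuity, and local strictness. If anything, you make the final positive-measure step ($\la(B)>0$ via openness of the strict parts and full support) more explicit than the paper, which leaves that point largely implicit after establishing ${\succ} \neq {\succ}^*$.
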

\begin{proof}
    First, we observe that ${\succeq} \neq{\succeq}^*$ implies that ${\succ}^* \setminus {\succ}$ contains a non-empty open set.
    Indeed, suppose first that $x,y \in X$ satisfies $x \succeq y$ but $(x,y) \notin {\succeq}^*$. 
    By completeness, $y \succ^* x$.
    Then, by continuity, there are neighborhoods $U$ and $V$ of $x$ and $y$ respectively with $V \succ^* U$ (that is, every alternative in $V$ is ranked strictly above every alternative in $U$ according to $\succeq^*$).
    Because $x \succeq y$, by local strictness, there exists $(x',y')\in U \times V$ with $x'\succ y'$. By continuity again there are neighborhoods $U'\ni x'$ and $V'\ni y'$ with $U'\succ V'$. Thus $(y',x')\in (V\cap V')\times (U\cap U')\subseteq {\succ}^*\setminus  {\succ}$. Second, consider the case where there is  $x,y \in X$ with $x \succeq^* y$ but $(x,y) \notin {\succeq}$. This means that $y\succ x$, so there are  neighborhoods $U\ni x$ and $V\ni y$ with $V\succ X$. By local strictness, there is $(x',y')\in U\times V$ with $x'\succ^* y'$. Thus $(y',x')\in{\succeq}\setminus{\succeq}^*$ and we are in the situation above.
    
    Let $q(x,y)$ be a short notation for $q(\succeq^*;x,y)$, and for a binary relation $R$, let $\one_R(x,y)$ if and only if $(x,y) \in R$. We show that $\mu({\succeq}^*)- \mu({\succeq}) > 0$ by the sequence of inequalities below:
    \begin{align*} 
    \mu({\succeq}^*)- \mu({\succeq}) & = \int_{X\times X} \big[\one_{\succeq^*}(x,y) q(x,y) - \one_{\succeq}   (x,y) q(x,y)\big] \de\la(x,y) \\
    &=\int_{X\times X} \big[\one_{{\succeq}^*\setminus {\succeq}}(x,y) q(x,y) - \one_{{\succeq}\setminus {\succeq}^*}(x,y)q(x,y) \big] \de\la(x,y) \\
    &=\int_{X\times X} \big[\one_{\succeq^*\setminus \succeq}(x,y) q(x,y) - \one_{\succeq\setminus \succeq^*}(y,x) q(y,x) \big] \de\la(x)\de\la(y)\\
    &=\int_{X\times X} \big[\one_{\succeq^*\setminus \succeq}(x,y) q(x,y) - \one_{\succ^*\setminus \succ}(x,y) q(y,x) \big] \de\la(x)\de\la(y) \\
    &= \int_{X\times X} \one_{\succ^*\setminus \succ}(x,y) \big[q(x,y) - q(y,x)\big] \de\la(x)\de\la(y) \\
    & > 0.
    \end{align*} 
    The third equality obtains with a change of variables. The fourth equality uses the completeness of $\succeq^*$ and $\succeq$, which means that $(y,x)\in\succeq\setminus\succeq^*$ if and only if $x\succ^* y$ and $x\not\succ y$. The fifth equality follows as $\one_{\succeq^*\setminus \succeq}(x,y) $ and 
    $\one_{\succ^*\setminus \succ}(x,y)$ are equal $\la$-almost surely. The final inequality owes to the fact that $q(x,y)>1/2>q(y,x)$ if $(x,y) \in {\succ}^*$, and that there is an open set on which $\one_{\succ^*\setminus \succ}=1$.
\end{proof}

\begin{lemma}\label{lem:continuityofmu}
    The mapping ${\succeq} \mapsto \mu({\succeq})$ is continuous on $\P$.
\end{lemma}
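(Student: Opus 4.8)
The plan is to reduce the statement to pointwise convergence of indicators and then invoke dominated convergence. First I would rewrite the quantity of interest as
$\mu({\succeq}) = \int_{X\times X} \one_{\succeq}(x,y)\, q({\succeq}^*;x,y)\,\de\la(x,y)$,
where $\one_{\succeq}$ denotes the indicator of the set ${\succeq}\subseteq X\times X$ and $q({\succeq}^*;\cdot)$ is measurable and bounded by $1$. Since the closed convergence topology on $\P$ is metrizable under our assumptions, it suffices to prove sequential continuity: given ${\succeq}_n\to{\succeq}$ in $\P$, I want $\mu({\succeq}_n)\to\mu({\succeq})$. Note that the limit ${\succeq}$ lies in $\P$ because $\P$ is closed (Assumption~\ref{cond:2}), and by completeness of the limit the three sets $\{x\succ y\}$, $\{y\succ x\}$, $\{x\sim y\}$ partition $X\times X$.

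The core step is to show that $\one_{{\succeq}_n}(x,y)\to\one_{\succeq}(x,y)$ for $\la$-almost every $(x,y)$, analyzing each cell of this partition. On $\{x\succ y\}$: if $x\not\succeq_n y$ held along some subsequence, then completeness of each ${\succeq}_n$ would give $y\succeq_n x$ along that subsequence, and applying the upper-limit condition (condition~(2) in the definition of convergence) to the constant sequence $(y,x)$ would yield $y\succeq x$, contradicting $x\succ y$. Hence $\one_{{\succeq}_n}(x,y)=1$ eventually, matching $\one_{\succeq}(x,y)=1$. Symmetrically, on $\{y\succ x\}$ the constant sequence $(x,y)$ forces $x\not\succeq_n y$ eventually, so $\one_{{\succeq}_n}(x,y)=0=\one_{\succeq}(x,y)$.

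The only remaining cell is the indifference set $\{x\sim y\}$ of the limit, where the indicators may genuinely fail to converge. This is precisely where Assumption~\ref{cond:3S} does the work: since ${\succeq}\in\P$, the set $\{(x,y):x\sim y\}$ has $\la$-probability zero and is therefore $\la$-negligible. With pointwise convergence established $\la$-almost everywhere, the integrands $\one_{{\succeq}_n}q$ are dominated by $q\le 1$, which is $\la$-integrable because $\la$ is a probability measure; dominated convergence then gives $\mu({\succeq}_n)\to\mu({\succeq})$, completing the argument. The main obstacle is exactly the boundary behavior on $\{x\sim y\}$: the argument hinges on recognizing that the sole place where pointwise convergence of the indicators can break down is the indifference set of the \emph{limiting} preference, which Assumption~\ref{cond:3S} renders $\la$-null. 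Everything else is a routine combination of the closed convergence conditions, completeness, and dominated convergence.
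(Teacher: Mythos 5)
Your proof is correct and follows essentially the same route as the paper: both arguments reduce to sequential continuity, show the indicators $\one_{{\succeq}_n}(x,y)$ converge pointwise except possibly on the indifference set of the limiting preference (via completeness and the upper-limit property of closed convergence), invoke Assumption~3' to make that set $\la$-null, and conclude by dominated convergence. The only cosmetic difference is that you phrase the pointwise step as a case analysis over the partition $\{x\succ y\}$, $\{y\succ x\}$, $\{x\sim y\}$ with contradiction arguments, whereas the paper states the contrapositive directly; the substance is identical.
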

\begin{proof}
    Let ${\succeq} \in \P$ and $\{ {\succsim}_i \}_{i \in \Na}$ be a sequence of preferences in $\P$ with ${\succsim}_i \rightarrow {\succeq}$. If $x \succsim_i y$ for infinitely many values of $i$, then $x \succeq y$. Hence, by completeness of the preference $\succeq$, if the sequence of binary numbers $\{ \one_{(x,y) \in {\succsim}_i} \}_{i \in \Na}$ diverges, then $x \sim y$.
    
    Next, suppose $x \nsim y$. Then, there are two possibilities. Either $\one_{(x,y) \in {\succsim}_i} = 1$ for all $i$ large enough, then $x \succ y$. Or $\one_{(x,y) \in {\succsim}_i} = 0$ for all $i$ large enough, then, by completeness, $\one_{(y,x) \in {\succsim}_i} = 1$ for all $i$ large enough, and so $y \succ x$.
    
    Finally, recall that, by Assumption (3), the set $\{ (x,y) : x \sim y \}$ has $\lambda$-probability 0. Hence,     
    \begin{align*}
    \mu({\succeq}) & = \int_{\{(x,y) : x \nsim y \}} \one_{(x,y) \in {\succeq} } \, q({\succeq}^*; x,y) \de \la(x,y)  \\
    & = \int_{\{(x,y) : x \nsim y \}} \lim_{n \to \infty} \one_{(x,y) \in {\succsim}_i} \, q({\succeq}^*; x,y) \de \la(x,y) \\
    & = \lim_{i \to \infty} \int_{\{(x,y) : x\nsim y \}} \one_{(x,y)\in {\succsim}_i} \, q({\succeq}^*; x,y) \de \la(x,y) \\
    & = \lim_{i \to \infty} \int \one_{{\succsim}_i} \de \mu = \lim_{i \to \infty} \mu({\succsim}_i), \\
    \end{align*} 
    where the interchange between the limit and integration operators follows from Lebesgue dominated convergence.
\end{proof}

\begin{lemma}\label{lem:continuityofd}
    For all ${\succeq}^\star \in \P$ and all $n \in \Na$, the mapping ${\succeq} \mapsto \bar d_n({\succeq}, R_n)$ defined on $\P$ is almost surely continuous at ${\succeq}^\star$.
\end{lemma}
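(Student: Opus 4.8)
The plan is to exploit that, for a fixed realization of the experiment, $\bar d_n(\cdot, R_n)$ is a finite average of indicator maps. Since $R_n$ records one ordered pair per binary problem in $\Sigma_n$, we can write
\[
	\bar d_n({\succeq}, R_n) = \frac{1}{n} \sum_{(x,y) \in R_n} \one_{(x,y) \in {\succeq}},
\]
so it suffices to show that each map ${\succeq} \mapsto \one_{(x,y) \in {\succeq}}$ is continuous at ${\succeq}^\star$ for every pair $(x,y) \in R_n$, and then invoke that a finite sum of functions continuous at a point is continuous there.

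Next I would establish the pointwise claim: for a fixed pair $(a,b)$ with $a \nsim^\star b$, the map ${\succeq} \mapsto \one_{(a,b) \in {\succeq}}$ is continuous at ${\succeq}^\star$. Take any sequence in $\P$ with ${\succeq}_i \rightarrow {\succeq}^\star$. If $a \succ^\star b$, then $\one_{(a,b)\in{\succeq}^\star} = 1$, and I claim $a \succeq_i b$ for all large $i$: otherwise, along a subsequence, completeness gives $b \succeq_{i_k} a$, and applying the second convergence condition of Section~\ref{sec:mainresults} to the constant sequence $(b,a)$ yields $b \succeq^\star a$, contradicting $a \succ^\star b$. Symmetrically, if $b \succ^\star a$, then $\one_{(a,b)\in{\succeq}^\star} = 0$, and $a \nsucceq_i b$ for all large $i$, for otherwise the same convergence condition applied to the constant sequence $(a,b)$ gives $a \succeq^\star b$, again a contradiction. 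In either case the indicator is eventually constant and equal to its value at ${\succeq}^\star$, which is continuity. Note that continuity genuinely fails when $a \sim^\star b$: a sequence with $b \succ_i a$ would drive the indicator to $0$ while $\one_{(a,b)\in{\succeq}^\star}=1$. This is exactly the case the ``almost surely'' qualifier must exclude.

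Finally I would discharge the almost-sure part. The unordered pairs making up $\Sigma_n$ are drawn i.i.d.\ from $\la$, and by Assumption~\ref{cond:3S} the indifference set $\{(x,y) : x \sim^\star y\}$ of ${\succeq}^\star \in \P$ has $\la$-probability $0$. Because indifference is symmetric, whether a drawn problem yields an indifference pair of ${\succeq}^\star$ is independent of the (random) choice recorded in $R_n$, and this event has probability $0$ for each draw; a union bound over the finitely many problems in $\Sigma_n$ then shows that, with probability one, no pair appearing in $R_n$ is an indifference pair of ${\succeq}^\star$. On this probability-one event every $(x,y) \in R_n$ satisfies $x \nsim^\star y$, so by the previous paragraph each indicator is continuous at ${\succeq}^\star$, and hence so is their finite average $\bar d_n(\cdot, R_n)$.

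The main obstacle is conceptual rather than computational: one must correctly locate the single source of discontinuity of the indicator maps---indifference pairs of the \emph{base} preference ${\succeq}^\star$, not of the moving preference ${\succeq}$---and then recognize that Assumption~\ref{cond:3S} is precisely what renders this $\la$-negligible set irrelevant almost surely. The indicator-continuity argument itself is short once the second convergence condition, together with completeness, is used to rule out spurious jumps of the indicator.
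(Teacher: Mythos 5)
Your proposal is correct and takes essentially the same route as the paper's own proof: the paper likewise treats $\bar d_n(\cdot,R_n)$ as a finite average of indicator maps, shows each indicator is eventually constant along any sequence ${\succsim}_i \rightarrow {\succeq}^\star$ whenever the sampled pair is not an indifference pair of ${\succeq}^\star$ (using completeness and the second convergence condition, exactly the argument of Lemma~\ref{lem:continuityofmu}), and then invokes Assumption~\ref{cond:3S} to conclude that with probability one no decision problem in $\Sigma_n$ is an indifference pair. Your identification of indifference pairs of the base preference ${\succeq}^\star$ as the sole source of discontinuity, and the union bound over the finitely many draws, match the paper's reasoning.
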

\begin{proof}
    Let ${\succeq} \in \P$ and $\{ {\succsim}_i \}_{i \in \Na}$ be a sequence of preferences in $\P$ with ${\succsim}_i \rightarrow {\succeq}$ as $i \to \infty$.
    Suppose that, for all decision problems $\{x_k,y_k\} \in \Sigma_n$, $x_k \nsim  y_k$. By the same argument as in         Lemma~\ref{lem:continuityofmu}, either $x_k \succsim_i y_k$ for $i$ large enough and then $x_k \succ y_k$, or $y_k \succsim_i x_k$ for $i$ large enough and then $y_k \succ x_k$. Hence, $\bar d_n({\succsim}_i, R_n) \to \bar d_n({\succeq}, R_n)$ as $i \to \infty$.
    And by assumption (3), with probability 1, it is the case that for all decision problems $\{x_k,y_k\} \in \Sigma_n$, $x_k \nsim  y_k$, which concludes the proof.
\end{proof}

Recall that, by Assumption (2), $\P$ is closed, and so compact as a closed subset of the set of all continuous binary relations on $X$, which itself is compact by Lemma~\ref{lem:compact} in Appendix B.
That $\P$ is compact and the continuity statement in Lemma~\ref{lem:continuityofd} imply that $\bar d_n({\succeq}, R_n)$ converges uniformly in probability to $\mu(\succeq)$ over the domain $\P$ by uniform laws of large numbers (for example, Lemma 2.4 of \citet*{newey1994large}). 
Finally, the mapping ${\succeq} \mapsto \mu({\succeq})$ is continuous on $\P$ by Lemma~\ref{lem:continuityofmu}, and is uniquely maximized at $\succeq^*$.
Under the above conditions, standard consistency theorems apply, such as Theorem 2.4 of \citet*{newey1994large}, and thus $\{ {\succeq}_n \}_{n \in \Na}$ converges to $\succeq^*$ in probability.

\section{Proof of Theorem \ref{thm:3}}\label{app:proof:thm-3}

As in the proof of Theorem \ref{thm:2}, throughout this proof, we fix the subject's preference $\succeq^*$, $\succeq_n$ denotes the Kemeny-minimizing estimator for the size-$n$ experiment $\Sigma_n$, and $R_n$ is the revealed preference relation for $\Sigma_n$. We continue to write $\mu({\succeq},{\succeq}^*)$ as $\mu({\succeq})$, and we also let $\bar d_n({\succeq}, R_n) = \frac{1}{n} \abs{{\succeq}\cap R_n}$. Let $V_\P$ be the VC dimension of $\P$, let
\begin{equation*}
	C = 24^2 \frac{V_{\P} \log (4 e^2)}{5n},	
\end{equation*}
and, for each $n$, let
\begin{align*}
	Z_n & = \sup\big\{\bar d_n({\succeq},R_n) - \mu({\succeq}) : {\succeq} \in \P \big\}, \\    
	Y_n & = \sup\big\{\mu({\succeq}) - \bar d_n({\succeq},R_n)  : {\succeq} \in \P \big\}.
\end{align*}

By the bounded differences inequality (for example, Theorem 6.2 of \citealp*{boucheron2013concentration}),
\begin{equation*}
	\Pr\big(Z_n - \E Z_n > t \text{ or } Y_n - \E Y_n > t \big) \leq 2 e^{-2 t^2 n}.	
\end{equation*}
Moreover, by Theorem 13.7 of \citet*{boucheron2013concentration},
\begin{equation*}
	\max \big\{\sqrt{n}\E Z_n , \sqrt{n}\E Y_n \big\} \leq 72 \sqrt{V_{\P} \log (4e^2)},	
\end{equation*}
as long as $n\geq C$.

Hence on the event $\{Z_n - \E Z_n \leq t \text{ and  } Y_n - \E Y_n \leq t\}$, we have 
\begin{align*}
    \mu({\succeq_n}) & \geq \bar d_n({\succeq}_n,R_n) - Z_n \\
        & \geq \bar d_n({\succeq}^*,R_n) - \left(72 \sqrt{\frac{V_{\P} \log (4e^2)}{n}} + t \right) \\
        & \geq \mu({\succeq^*}) - Y_n - \left(72 \sqrt{\frac{V_{\P} \log (4e^2)}{n}} + t\right) \\
        & \geq \mu({\succeq^*}) - 2 \left(72 \sqrt{\frac{V_{\P} \log (4e^2)}{n}} + t \right). 
\end{align*} 
The first inequality follows from the definition of $Z_n$, the second from the definition of the event we are in, and the bound on $\sqrt{n}\E Z_n$, the third from the definition of $\succeq_n$, the fourth from the definition of $Y_n$, and the final inequality from the bound on $\sqrt{n}\E Y_n$.

Thus, as long as 
\begin{equation}\label{eq:boundforr}
    2\left(72 \sqrt{\frac{V_{\P} \log (4e^2)}{n}} + t \right) < r= r(\eta),  
\end{equation} we know that $\rho({\succeq}_n,{\succeq}^*)<\eta$.

Setting $\da = 2 e^{-2 t^2 n}$ so that $t^2 = \ln (2/\da)/ (2 n)$, together with Equation~\eqref{eq:boundforr}, we get 
\begin{equation*}
	144 \sqrt{\frac{V_{\P} \log (4e^2)}{n}} + 2 \sqrt{\frac{\ln (2/\da) }{2 n}}< r. 	
\end{equation*}

Hence,
\begin{equation*}
	N(\eta, \delta) \leq \max\big\{ r^{-2} (144 \sqrt{V_{\P} \log (4e^2)} + \sqrt{2\ln (2/\da)})^2, C \big\},	
\end{equation*}
where the constant $C$ does not depend on $\da$ nor $\eta$, and hence, as $\da \to 0$ and $\eta \to 0$,
\begin{equation*}
	N(\eta, \delta) = O\left(\frac{1}{r(\eta)^2} \ln \frac{1}{\delta} \right).	
\end{equation*}

\section{Proof of Proposition \ref{prop:utility-revealed}}\label{app:proof:utility-revealed}

Proposition~\ref{prop:utility-revealed} follows from the following result by \citet{border1994dynamic}.

\begin{theorem}[Theorem 8 of \citealp{border1994dynamic}]
    \label{thm:border}
    Let $X$ be a locally compact and separable metric space. 
    Let the space of continuous preference relations on $X$ be endowed with the topology of closed convergence. 
    Let the space of continuous functions on $X$ be endowed with the topology of compact convergence.  
    If $\Phi(u)$ is locally strict, then $\Phi$ is continuous at $u$.
\end{theorem}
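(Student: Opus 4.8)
The plan is to exploit the metrizability of both spaces under Assumption~\ref{cond:1}: since $X$ is locally compact and separable, $C(X,d)$ with the topology of compact convergence is metrizable, and so is the hyperspace of closed subsets of $X \times X$ with the closed convergence topology. Continuity of $\Phi$ at $u$ is therefore equivalent to sequential continuity at $u$. I fix a sequence $u_n \to u$ converging uniformly on compacta, write ${\succeq} = \Phi(u)$ and ${\succeq}_n = \Phi(u_n)$, and verify the two defining conditions of closed convergence for ${\succeq}_n \to {\succeq}$ as stated in Section~\ref{sec:mainresults}.

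The outer-limit condition (condition 2) does not require local strictness. Given a subsequence and points $(x_{n_k}, y_{n_k}) \to (x^*, y^*)$ with $x_{n_k} \succeq_{n_k} y_{n_k}$, that is $u_{n_k}(x_{n_k}) \geq u_{n_k}(y_{n_k})$, I observe that the terms together with their limit lie in a compact set $K \subseteq X$. On $K$ the convergence $u_{n_k} \to u$ is uniform, and together with continuity of $u$ this yields $u_{n_k}(x_{n_k}) \to u(x^*)$ and $u_{n_k}(y_{n_k}) \to u(y^*)$; passing to the limit in the inequality gives $u(x^*) \geq u(y^*)$, i.e. $x^* \succeq y^*$.

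The inner-limit condition (condition 1) is where local strictness is essential, and I expect it to be the main obstacle. Fix $(x^*, y^*)$ with $x^* \succeq y^*$, i.e. $u(x^*) \geq u(y^*)$; I must produce $(x_n, y_n) \to (x^*, y^*)$ with $u_n(x_n) \geq u_n(y_n)$ for every $n$. The difficulty is the indifference case $u(x^*) = u(y^*)$: there $u_n$ may reverse the ranking at $(x^*, y^*)$ and at all nearby pairs, so the constant sequence need not work. Local strictness of $\Phi(u)$ resolves this by supplying, for each $k$, a pair $(a_k, b_k)$ within distance $1/k$ of $(x^*, y^*)$ with $u(a_k) > u(b_k)$; the strict gap $\gamma_k = u(a_k) - u(b_k) > 0$ survives small perturbations of the utility. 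Using a fixed compact neighborhood $K$ of $\{x^*, y^*\}$ (available by local compactness) and uniform convergence of $u_n$ to $u$ on $K$, I obtain thresholds $M_k$, which I may take strictly increasing with $M_k \to \infty$, such that $u_n(a_k) > u_n(b_k)$ whenever $n \geq M_k$.

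The proof concludes with a diagonal construction: set $(x_n, y_n) = (a_k, b_k)$ for $M_k \leq n < M_{k+1}$, and fill the finitely many indices $n < M_1$ with a reflexive pair such as $(x^*, x^*)$. Each term then satisfies $u_n(x_n) \geq u_n(y_n)$, and since $M_k \to \infty$ forces $k \to \infty$ as $n \to \infty$, the pairs converge to $(x^*, y^*)$, establishing condition 1 and hence $\Phi(u_n) \to \Phi(u)$. The delicate points I anticipate are the bookkeeping of the diagonal thresholds and the disciplined use of uniform convergence only on compacta rather than on all of $X$.
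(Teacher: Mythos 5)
Your proposal is correct, but note that the paper itself contains no proof of this statement: it is imported verbatim as Theorem 8 of \citet{border1994dynamic}, and the appendix merely cites it to deduce Proposition~\ref{prop:utility-revealed}. You have therefore supplied what the paper delegates to the literature, namely a self-contained verification. Your argument is sound. The reduction to sequences is legitimate because local compactness plus separability makes $X$ $\sigma$-compact and hence hemicompact, so $C(X,d)$ with compact convergence is metrizable (first countability of the domain is all that is needed for sequential continuity to imply continuity), while the hyperspace of closed subsets of $X\times X$ under closed convergence is compact metrizable, as in Lemma~\ref{lem:compact}, so convergence there is exactly the two Kuratowski conditions you check. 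Your condition~2 argument is the standard one and correctly uses no local strictness: the points of the subsequence together with their limits form a compact set, on which uniform convergence plus continuity of $u$ lets you pass to the limit in $u_{n_k}(x_{n_k})\ge u_{n_k}(y_{n_k})$. Your condition~1 argument isolates precisely where local strictness enters --- the indifference case $u(x^*)=u(y^*)$, where perturbed utilities can reverse the ranking on a whole neighborhood --- and the strict-gap-plus-diagonalization construction is correct: for $n\ge M_k$ one has $u_n(a_k) > u(a_k)-\gamma_k/2 \ge u(b_k)+\gamma_k/2 > u_n(b_k)$. Two pieces of bookkeeping you should make explicit in a final write-up: the pairs $(a_k,b_k)$ lie in your fixed compact neighborhood $K$ only for $k$ large (either start the diagonal at such a $k_0$, or replace $K$ by the compact set $K\cup\{a_k,b_k : k\in\Na\}\cup\{x^*,y^*\}$, which is compact since $a_k\to x^*$ and $b_k\to y^*$); and the filler pairs $(x^*,x^*)$ for $n<M_1$ belong to $\Phi(u_n)$ trivially and, being finitely many, do not disturb convergence of the sequence to $(x^*,y^*)$. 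With those details pinned down, your proof is a complete and more elementary substitute for the citation, at the cost of proving continuity only along sequences (which, as noted, suffices here) rather than reproducing Border's more general treatment.
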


Under the conditions stated in Proposition~\ref{prop:utility-revealed}, $\Phi$ is continuous on $\U$ by Theorem~\ref{thm:border}, so $\Phi(\U)$ is compact as the continuous image of a compact set, and hence meets Assumption~\ref{cond:2}.

\section{Proof of Corollary \ref{cor:expected-utility-revealed}} \label{app:proof:expected-utility-revealed}

As explained in Section~\ref{subsec:Expected utility preferences}, any nonconstant expected utility preference can be represented by a member of $\V \equiv \{u \in \Re^d : \|u\|=1 \mbox{ and } \sum_i u_i = 0\}$, which is a compact set in the Euclidean topology (recall that $\| \cdot \|$ refers to the Euclidean norm).

First, observe that the set of functions $\{ U_v : v \in \V \}$, where $U_v$ is defined as $U_v(p) = v \cdot p$, is sequentially compact in the topology of compact convergence. Indeed, let $\{ U_{v^n} \}_{n \in \Na}$ be a sequence of functions where $v^n \in  \V$, and let $\{ v^{n_k} \}_{k \in \Na}$ be a convergent subsequence that converges to $v^*$.  Then $|U_{v^{n_k}}(p) - U_{v^*}(p)|=|(v^{n_k}-v^*)\cdot p|\leq \sqrt{\|v^{n_k}-v^*\|\|p\|}\leq \sqrt{\|v^{n_k}-v^*\|}$, where the first inequality is a Cauchy-Schwarz inequality. Because $\Delta^{d-1}$ is compact, in the present case the topology of compact convergence is a metric topology, and the notions of compactness and sequential compactness are equivalent, so $\{ U_v : v \in \V \}$ is compact.

Second, each nonconstant expected utility preference is locally strict. Indeed, take $p, p' \in \Delta^{d-1}$ and suppose that $v \cdot p \geq v \cdot p'$. Let $p_*,p'_* \in \Delta^{d-1}$ for which $v \cdot p_*> v \cdot p_*'$ (such a pair exists because $v$ represents a nonconstant preference). 
Then, for any $\alpha \in (0,1)$,  $v \cdot (\alpha p_* + (1-\alpha)p) > v \cdot (\alpha p_*' + (1-\alpha)p')$. Local strictness follows by choosing $\alpha$ arbitrarily small.

Corollary~\ref{cor:expected-utility-revealed} then follows from applying Proposition~\ref{prop:utility-revealed}.

\section{Proof of Corollary~\ref{cor:expected-utility-stat}}\label{app:proof:expected-utility-stat}

The proof proceeds by computing an asymptotic lower bound on $r(\eta)$ defined in Equation~\eqref{eq:eta-definition} in Section~\ref{subsec:Convergence in statistical preference models} and then applying Theorem~\ref{thm:3}. 

Recall that $X$ is the simplex $\Delta^{d-1}$, and $\P$ is the set of nonconstant expected utility preferences. It will be convenient to refer to the elements of the simplex by the generic symbols for alternatives $x$ and $y$, as opposed to $p$ and $p'$.
Throughout, each preference of $\P$ is associated to a unique normalized vector of utility indices in the set $\{v \in \Re^d : \sum_{i=1}^d v_i=0, \, \|v\|=1\}$.
For $x \in \Delta^{d-1}$ and $\ep > 0$, we let $\B_\ep(x)$ be the open ball of radius $\ep$ and center $x$ in $\{ z \in \Re^d : \sum_{i=1}^d z_i = 1 \}$, which is the affine span of the simplex $\Delta^{d-1}$.

Let us start with the following lemma. 

\begin{lemma}
	\label{lem:two-separate-balls-1}
	Let $0 < \eta <1$ and $\succeq_u, \succeq_v \in \P$ with $\rho({\succeq}_u, {\succeq}_v) \ge \eta$.
	There exists $x_0, y_0 \in X$ such that, for all $x \in \B_{\eta'}(x_0)$ and $y \in \B_{\eta'}(y_0)$,
	\begin{align*}
		u \cdot x &\ge u \cdot y + \frac{\eta^2}{4 d}, \\
		v \cdot y &\ge v \cdot x + \frac{\eta^2}{4 d}.		
	\end{align*}
	where $u$ and $v$ are the normalized utility indices associated respectively with $\succeq_u$ and $\succeq_v$, and $\eta' \equiv \eta^2 / (10 d)$. In addition, $\B_{\eta'}(x_0) \times \B_{\eta'}(y_0) \subset X \times X$.
\end{lemma}
\begin{proof}
	Let
	\begin{equation*}
		x_0 = \frac{1}{d} \one + \pp{\frac{1}{d} - \eta'} u, \quad \text{ and } \quad y_0 = \frac{1}{d} \one + \pp{\frac{1}{d} - \eta'} v.
	\end{equation*}
	The inequality $\rho({\succeq}_u, {\succeq}_v) \ge \eta$ is equivalent to $\|u - v\| \ge \eta$, or $u \cdot v \le 1 - \eta^2/2$.
	Let $x \in \B_{\eta'}(x_0)$ and $y \in \B_{\eta'}(y_0)$. 
	The following sequence of inequalities obtains:
	\begin{align*}
		u \cdot x 
		&= u \cdot (x-x_0) + u \cdot x_0 \\
		&\ge u \cdot x_0 - \eta' \\
		&\ge u \cdot y_0 + \pp{\frac{1}{d} - \eta'} \frac{\eta^2}{2} - \eta' \\
		&= u \cdot (y_0-y) + u \cdot y + \pp{\frac{1}{d} - \eta'} \frac{\eta^2}{2} - \eta' \\
		&\ge u \cdot y + \pp{\frac{1}{d} - \eta'} \frac{\eta^2}{2} - 2 \eta' \\
		&\ge u \cdot y + \frac{\eta^2}{4 d}.
	\end{align*}
	The first inequality owes to the fact that $| u \cdot (x-x_0) | \le \| u \| \| x-x_0 \| \le \eta'$. Similarly, $| u \cdot (y-y_0) | \le \eta'$ which yields the third inequality. The second inequality comes from $u \cdot x_0 = 1/d - \eta'$ and 
	\begin{equation*}
		u \cdot y_0 = \pp{\frac{1}{d} - \eta'} u \cdot v \le  \pp{\frac{1}{d} - \eta'} -  \pp{\frac{1}{d} - \eta'} \frac{\eta^2}{2}.
	\end{equation*}
	The fourth inequality owes to the fact that $\eta < 1$ and
	\begin{equation*}
		\pp{\frac{1}{d} - \eta'} \frac{\eta^2}{2} - 2 \eta' = \frac{\eta^2}{4d} + \frac{\eta^2 - \eta^4}{20 d} \ge \frac{\eta^2}{4d}.
	\end{equation*}
	By a symmetric argument, we also have
	\begin{equation*}
		v \cdot y \ge v \cdot x + \frac{\eta^2}{4 d}.			
	\end{equation*}
	Finally, observe that $\eta'$ is chosen small enough to ensure that both $\B_{\eta'}(x_0)$ and $\B_{\eta'}(y_0)$ are included in the simplex $\Delta^{d-1}$.
\end{proof}

We now return to the main proof. Let us fix the subject's preference $\succeq^*$, and let $\succeq$ be any preference of $\P$ with $\rho({\succeq}^*, {\succeq}) \ge \eta$, with $0 < \eta < 1$.
As in the proofs of Theorems~\ref{thm:2} and~\ref{thm:3}, we use $q(x,y)$ as a short notation for $q(\succeq^*;x,y)$, and for a binary relation $R$, we let $\one_R(x,y) = 1$ if and only if $(x,y) \in R$. 

We established in the proof of Theorem~\ref{thm:2} that
\begin{equation*}
	\mu({\succeq}^*)- \mu({\succeq})
	= 
	\int_{X \times X} \one_{{\succ}^* \setminus {\succ}}(x,y) \big[q(x,y) - q(y,x)\big] \de\la(x, y).
\end{equation*}

By Lemma~\ref{lem:two-separate-balls-1}, there exists $x_0, y_0 \in X$ such that $\B_{\eta'}(x_0) \times \B_{\eta'}(y_0) \subset X \times X$, and if $(x,y) \in \B_{\eta'}(x_0) \times \B_{\eta'}(y_0)$ (with $\eta' \equiv \eta^2 / (10 d)$) then $x \succ^* y$ while $y \succ x$. Also, if $x \succ^* y$, then $q(x,y) - q(y,x) \ge 0$. Hence,
\begin{align*} 
	\mu({\succeq}^*)- \mu({\succeq}) 
	& = \int_{{\succ}^* \setminus {\succ}} \big[q(x,y) - q(y,x)\big] \de\la(x, y) \\    
	& \geq \int_{\B_{\eta'}(x_0) \times \B_{\eta'}(y_0)} \big[q(x,y) - q(y,x)\big] \de\la(x, y) \\
 & \geq \inf \big\{q(x,y) - q(y,x) :  (x,y) \in \B_{\eta'}(x_0) \times \B_{\eta'}(y_0) \big\} 
 \\ & \hphantom{\geq} \times \la\big( \B_{\eta'}(x_0) \big) \times \la\big( \B_{\eta'}(y_0) \big).
\end{align*}

The Lebesgue measure of each of the $(d-1)$-dimensional balls $\B_{\eta'}(x_0)$ and $\B_{\eta'}(y_0)$ is
\begin{equation*}
	\frac{\pi^{(d-1)/2}}{\Gamma\big((d-1)/2+1 \big) } {\eta'}^{d-1},
\end{equation*}
where $\Gamma$ is the Gamma function, so
\begin{equation*}
	\la\big( \B_{\eta'}(x_0) \big) \times \la\big( \B_{\eta'}(y_0) \big) = \Omega(\eta^{4(d-1)})
\end{equation*}
as $\eta \rightarrow 0$, where the big Omega notation refers to the asymptotic lower bound.

Since $x \in \B_{\eta'}(x_0)$ and $y \in \B_{\eta'}(y_0)$ implies $x \succ^* y$, by Equation~\eqref{eq:q-restricted-1}, 
\begin{equation*}
	q(x,y) - q(y,x) \ge 2 C |u \cdot x - u \cdot y|^k,
\end{equation*}
where $u$ is the normalized vector of utility indices associated with $\succeq^*$. By Lemma~\ref{lem:two-separate-balls-1}, we have
\begin{equation*}
	u \cdot x - u \cdot y \ge \frac{\eta^2}{4d}
\end{equation*}
and hence, 
\begin{equation*}
	\inf \big\{q(x,y) - q(y,x) :  (x,y) \in \B_{\eta'}(x_0) \times \B_{\eta'}(y_0) \big\} 
	\geq 2 C \pp{\frac{\eta^2}{4d}}^k
	= \Omega(\eta^{2k})
\end{equation*}
as $\eta \rightarrow 0$.

Overall, we get $\mu({\succeq}^*)- \mu({\succeq}) = \Omega( \eta^{4(d-1) + 2k} )$, and thus $r(\eta) = \Omega( \eta^{4(d-1) + 2k} )$.
Applying Theorem~\ref{thm:3} and observing that the VC dimension of $\P$ is no greater than $d+1$ (and so finite) by Proposition 4.20 of \citet{wainwright2019high}, we have
\begin{equation*}
	N(\eta,\da) = O\pp{ \frac{1}{\eta^{8(d-1) + 4k} } \ln \frac{1}{\da} }.
\end{equation*}

\section{Proof of Lemma \ref{lem:monotone-closed}}\label{app:proof:monotone-closed}

Let $\{\succeq_n\}_{n \in \Na}$ be a converging sequence of Grodal-transitive preferences that are strictly monotone with respect to $\rhd$, and let $\succeq^*$ be the limiting binary relation.

Recall that, by Lemma~\ref{lem:compact} in Appendix~\ref{app:proof:thm-1}, the set of continuous binary relations is closed, and so $\succeq^*$ is continuous.
Also, for each $x,y \in X$, either $x \succeq_n y$ or $y \succeq_n x$, so there is a subsequence $\{\succeq_{n_k} \}_{k \in \Na}$ for which either $x \succeq_{n_k} y$ for all $k$, or for which $y \succeq_{n_k} x$ for all $k$. Hence, either $x \succeq y$ or $y \succeq x$, which makes $\succeq^*$ complete. Hence, $\succeq^*$ is a preference.

Suppose by means of contradiction that $\succeq^*$ is not strictly monotone with respect to $\rhd$. In that case, there are $x,y \in X$ for which $x \rhd y$ and yet $y \succeq^* x$. 
Let $\{x_n\}_{n \in \Na}$, $\{y_n\}_{n \in \Na}$ be any sequences of alternatives in $X$ that converge to $x$ and $y$ and respectively and for which $y_n \succeq_n x_n$ for all $n$ (existence of sequences satisfying this property follows from the definition of closed convergence in Section~\ref{sec:Main results}). Because $\rhd$ is open, for $n$ large enough, $x_n \rhd y_n$, which contradicts the fact that $\succeq_n$ is strictly monotone with respect to $\rhd$. Hence, $\succeq^*$ is strictly monotone.

Finally, we show that $\succeq^*$ is Grodal-transitive.
Suppose $x,y,z,w \in X$ satisfy $x \succeq^* y \succ^* z \succeq^* w$. Let $\{x_n\}_{n \in \Na}$, $\{y_n\}_{n \in \Na}$, $\{z_n\}_{n \in \Na}$, $\{w_n\}_{n \in \Na}$ be sequences of alternatives in $X$ that converge to $x,y,z,w$ respectively, and for which $x_n \succeq_n y_n$ and $z_n \succeq_n w_n$ (which, again, exist by the definition of closed convergence). Since $y \succ^* z$, for $n$ large enough, $y_n \succ_n z_n$. Consequently, for $n$ large, $x_n \succeq_n y_n \succ_n z_n \succeq_n w_n$, which, by Grodal-transitivity, implies $x_n \succeq_n w_n$, and so $x \succeq^* w$.

\section{Proof of Lemma \ref{lem:monotone-locallystrict}}\label{app:proof:monotone-locallystrict}

For any $x \in X$, let $U_x$ be the set $\{y:y\succ x\}$.  
Let us show that for each $x,y \in X$, either $U_x \subseteq U_y$, or $U_y\subseteq U_x$.  
To see this, suppose by means of contradiction that there is $z \in U_x \setminus U_y$ and $w\in U_y\setminus U_x$.  
Then we have $y \succeq z \succ x$ and $x \succeq w \succ y$.  Therefore, $x \succeq w \succ y \succeq z$, which implies $x\succeq z$ by Grodal-transitivity.  This contradicts $z \succ x$.

Now, fix $x, y \in X$ such that $x \succeq y$, and fix a neighborhood $V$ of $(x,y)$ in the product space $X \times X$. By the lemma hypotheses, there exists $(x',y') \in V$ such that $x' \rhd x$ and $y \rhd y'$. Then, there are two possibilities: either either $U_{y'} \subseteq U_{x}$, or $U_x\subseteq U_{y'}$. In the first case, by monotonicity, $y' \succ y$, which implies $y \succ x$, contradicting $x \succeq y$. So, we must have $U_x \subseteq U_{y'}$. Then $x' \in U_x$, so $x' \in U_{y'}$, which implies $x' \succ y'$. Hence, $\succeq$ is locally strict.

\section{Proof of Corollary \ref{cor:preferences-menus-revealed}}\label{app:proof:preferences-menus-revealed}

First, we prove that the set of alternatives $\M$ meets Assumption~\ref{cond:1}.

That $\M$ is locally compact follows from Theorem 1.8.3 of \citet{schneider}, which demonstrates that the set of nonempty, convex subsets of the simplex is compact, and Theorem 18.4 of \citet{willard}. 
The fact that $\M$ is separable obtains from Theorem 3.85(3) of \citet{aliprantis2006infinite}, together with the fact that a subset of a separable metric space is itself separable (Problem 16G Part 1 of \citealp{willard}).
Next, we show that $\M$ is completely metrizable. This space is, by definition, metrizable. 
The Hausdorff metric is complete on the set of nonempty, closed, convex subsets of $\Delta^{d-1}$; let us call this set $\M^*$. 
This fact owes to a straightforward adaptation of Theorem 1.8.2 of \citet{schneider}, together with the fact that $\M^*$ is a Hausdorff-closed set in the space of compact, convex, nonempty subsets of $\{x \in R^d: \sum_i x_i = 1\}$, because a closed subset of a metric space is complete (Theorem 24.10 of \citealp{willard}).  
Further, $\M$ is a Hausdorff (relatively) open subset of $\M^*$. It then follows by Alexandroff's Theorem (Theorem 24.12 of \citealp{willard}) that $\M$ is completely metrizable.

Secondly, we show that the hypothesis of Lemma~\ref{lem:monotone-closed} is satisfied, that is, that the dominance relation $\sqsupset$ is open. 

This result can be obtained by means of a standard isometry between the set $\M$ endowed with the Hausdorff metric, and the set of support functions of members of $\M$ defined on $\U$ (Lemma 8 of \citealp*{dlr}; p. 594 of \citealp*{dlrs}; Theorem 1.8.11 of \citealp*{schneider}).
For a member $m$ of $\M$, such a support function is written $h_m(u) = \sup_{p \in m} u\cdot p$. 
We endow the set of these support functions with the sup-norm metric.\footnote{So that the distance between two functions $f,g$ is given by $\sup_{u \in \U} |f(u)-g(u)|$.}  
Observe that $m_A \sqsupset m_B$ if and only if for every $u\in \U$, $h_{m_A}(u) > h_{m_B}(u)$.  
In particular, since $\U$ is compact, $m_A \sqsupset m_B$ if and only if there is $\ep > 0$ for which $h_{m_A}(u)-h_{m_B}(u) > \ep$.  

For an element $m$ of $\M$, let $\mathcal{B}_\delta(m)$ denote the open ball of radius $\delta$ centered on $m$. Suppose $m_A, m_B \in \M$ satisfy $m_A \sqsupset m_B$. 
Let $m_A' \in \mathcal{B}_{\frac{\ep}{3}}(m_A)$ and  $m_B' \in \mathcal{B}_{\frac{\ep}{3}}(m_B)$.
By the isometry aforementioned, for each $u\in \U$, $h_{A'}(u)>h_A(u)-\frac{\ep}{3}$ and $h_{B'}(u)<h_B(u)+\frac{\ep}{3}$.  
Then $h_{A'}(u)-h_{B'}(u)>h_A(u)-h_B(u)-\frac{2\ep}{3}>\frac{\ep}{3}>0$, so that $m_A' \sqsupset m_B'$. 
Hence $\sqsupset$ is open.

Thirdly, we show that the hypothesis of Lemma~\ref{lem:monotone-locallystrict} is satisfied. Let $m_A \in \M$. For any $\alpha \in (0,1)$, let $m_B$ be the Minkowski sum of $m_A$ and $\Delta^{d-1}$, weighted by $\alpha$ and $1-\alpha$ respectively, i.e., $m_B = \{ \alpha p + (1-\alpha) p' : p \in m_A, p' \in \Delta(X) \}$.
Then, $m_B$ is arbitrarily close to $m_A$ by choosing $\alpha$ small enough, and $m_B \sqsupset m_A$.
Similarly, fix $p_0$ in the interior of $m_A$, and for any $\alpha \in (0,1)$, let $m_C = \alpha p_0 + (1-\alpha) m_A$. Then, $m_C$ is arbitrarily close to $m_A$ by choosing $\alpha$ small enough, and $m_A \sqsupset m_C$.

Hence, $\M$ meets Assumption~\ref{cond:1}, and by Lemmas~\ref{lem:monotone-closed} and~\ref{lem:monotone-locallystrict},  the class $\P$ meets Assumption~\ref{cond:2}.

\footnotesize
\bibliographystyle{ecta}
\bibliography{references}

\end{document}